\documentclass[runningheads]{llncs}
\usepackage[T1]{fontenc}
\usepackage{lmodern}
\usepackage{microtype}
\usepackage{amsmath,amssymb,booktabs,array}
\usepackage[hidelinks]{hyperref}
\newcommand{\F}{\mathbb F_q}
\newcommand{\G}{\mathbb G}
\newcommand{\GT}{\mathbb G_T}
\newcommand{\E}{\mathsf E}
\newcommand{\Span}{\operatorname{span}}
\newcommand{\rank}{\operatorname{rank}}
\newcommand{\uid}{\mathsf{id}}
\newcommand{\pk}{\mathsf{PK}}
\newcommand{\sk}{\mathsf{SK}}
\newcommand{\msk}{\mathsf{MSK}}

\newcommand{\sample}{\mathrel{\overset{\$}{\leftarrow}}}
\begin{document}
\title{Cryptanalysis and Repair of
	``Decentralized Multi-authority ABE for NC$^1$ from BDH''}

\titlerunning{Cryptanalysis and Repair of Decentralized MA-ABE}

\author{Xingwang Wang}

\authorrunning{X. Wang}

\institute{
	Shanghai Key Laboratory of Trustworthy Computing,\\
	East China Normal University, Shanghai 200062, China\\
	\email{wxw@stu.ecnu.edu.cn}
}

\maketitle
\begin{abstract}
	We present a plaintext-recovery attack on both the DBDH-based
	construction and the CBDH-based variant of Datta, Komargodski,
	and Waters (Journal of Cryptology, 2023).
	The attack succeeds with probability one in the claimed static
	security model using one corrupt authority and two users,
	each unauthorized even when granted the corrupt authority's
	attribute.
	The flaw is an invalid extension of the LSSS matrix-normalization
	lemma originating with Rouselakis and Waters (FC,~2015):
	preserving complete-share distributions does not preserve
	coordinate-wise ciphertext masks, which a corrupt authority
	can expose to remove identity binding from other users' shares.
	For both variants, replacing these masks with independent full
	linear sharings of zero preserves correctness and restores
	ciphertext-distribution invariance.
	We prove static security of both repaired variants
	in the random oracle model under their respective
	BDH assumptions, using the original hardcore-bit
	convention for the CBDH variant.
	Both proofs allow adaptive identity-hash queries.
	The modification retains the original matrix-width and
	row-labeling restrictions, keys, decryption, and ciphertext size,
	requiring only additional encryption randomness and
	finite-field arithmetic.
	We also attack both multi-authority attribute-based inner-product
	functional encryption constructions of Datta and Pal (PKC, 2023)
	and establish correctness and matrix-substitution invariance
	of the same repair.
	We prove static security of the repaired small-universe
	construction under asymmetric DBDH.
	For the large-universe construction, we identify a missing
	source-group encoding in the original reduction and give
	a complete static security proof under an augmented
	$L$-DBDH assumption, whose generic bilinear group hardness
	we establish.
	\keywords{Attribute-based encryption \and Multi-authority encryption \and Cryptanalysis \and Linear secret sharing \and Bilinear Diffie--Hellman}
\end{abstract}

\section{Introduction}
Attribute-based encryption (ABE) associates decryption rights with attributes rather than with an explicit list of recipients~\cite{SW05,GPSW06,BSW07}. A ciphertext-policy ABE (CP-ABE) scheme encrypts under an access policy, and a user decrypts if its attributes satisfy that policy. Crucially, users who are individually unauthorized must not gain decryption capability by pooling their keys. Multi-authority ABE (MA-ABE), introduced by Chase~\cite{Chase07}, distributes attribute certification across different authorities. Lewko and Waters~\cite{LW11} developed decentralized MA-ABE, in which authorities generate their secret keys independently. Their user keys are tied to a common user identifier to prevent collusion across identities.

Linear secret-sharing schemes (LSSS) provide an efficient language for expressing access policies. A matrix $M$ shares a secret $z$ through $M\boldsymbol v$, where the first coordinate of $\boldsymbol v$ is $z$ and its remaining coordinates are uniformly random. Authorized sets reconstruct the secret by a linear combination of their shares. Security arguments often change the basis of the sharing space to isolate rows controlled by corrupt authorities~\cite{RW15,DKWLWE21,DKW23}. Applying such a transformation within an encryption proof requires examining how it affects all observable ciphertext components, rather than only the underlying secret shares.

\paragraph{From share invariance to an invalid ciphertext substitution.}
The matrix-normalization lemma originating with Rouselakis and Waters~\cite{RW15} permits a change of basis that preserves the joint distribution of complete LSSS shares while placing corrupt-authority rows in a form useful for simulation. Writing the transformed matrix as $N=MP$, the share-level guarantee follows by absorbing $P$ into the sharing vector, where $P$ is invertible and preserves its prescribed first coordinate. The security argument of Datta, Komargodski, and Waters~\cite{DKW23} extends this guarantee to the entire challenge ciphertext. However, their ciphertext contains separately encoded terms $M_{i,k}x_k$, in addition to complete shares $M_i\boldsymbol v$. Preserving the distribution of $M\boldsymbol v$ does not establish that of the coordinate-wise array $\{M_{i,k}x_k\}_{i,k}$. A change of basis can mix coordinates, whereas each scalar mask $x_k$ remains attached to a single coordinate. Consequently, the asserted equality of ciphertext distributions does not follow from the normalization lemma.

This application error is distinct from the earlier correction of the proof of the normalization lemma of Rouselakis and Waters~\cite{RW15}. Datta, Komargodski, and Waters~\cite{DKWLWE21} corrected a matrix-inversion argument in that proof and supplied a valid normalization transformation. Our analysis accepts the corrected share-level guarantee and shows that it does not justify the subsequent ciphertext-level substitution in their Journal of Cryptology construction.

\paragraph{From the proof gap to a plaintext-recovery attack.}
The main LSSS construction of Datta, Komargodski, and Waters~\cite{DKW23} claims security under the decisional bilinear Diffie--Hellman (DBDH) assumption. Its ciphertext contains components of the form $g^{y_{\rho(i),k}r_i+M_{i,k}x_k}$, together with $g^{r_i}$. Here $\rho(i)$ labels row $i$, $y_{\rho(i),k}$ is an authority secret exponent, and $r_i$ is row randomness. A corrupt authority knows $y_{\rho(i),k}$ and can therefore remove the authority-dependent factor. Whenever $M_{i,k}\ne0$, raising the remaining element to $M_{i,k}^{-1}$ reveals $g^{x_k}$ without computing a discrete logarithm.

This observation yields a complete attack when the corrupt rows expose the coordinate masks needed by a set of colluding users. The adversary can then remove the identity-dependent terms from their decryption shares and combine the resulting shares across identities. We give an explicit three-row conjunction policy for which every queried identity remains unauthorized even when its attributes are combined with those of the corrupt authority. Nevertheless, one corrupt authority and two such users suffice to recover the challenge plaintext with probability one. The attack follows the original static commitment requirements and uses a matrix whose entries are only zero or one.

\paragraph{Our repair.}
For each hash coordinate $k$, we replace $M_{i,k}x_k$ by $M_i\boldsymbol w^{(k)}$, where $\boldsymbol w^{(k)}$ is sampled independently and uniformly subject to $w^{(k)}_1=0$. The same vector $\boldsymbol w^{(k)}$ is used across all rows for that coordinate. Authorized reconstruction cancels every such zero sharing, preserving the original decryption algorithm.

The modification also restores the distributional property required by the proof. Under an admissible change of basis, the sharing vector and all zero-sharing vectors can be transformed jointly while preserving their sampling distributions and independence. This gives equality of the complete ciphertext payloads under a suitable coupling, with the original policy retained in the ciphertext header. We then give a direct static DBDH reduction in the random oracle model. The repair preserves authority keys, user keys, decryption, and ciphertext size, at the cost of additional encryption randomness and finite-field arithmetic.

\paragraph{Extension to attribute-based inner-product functional encryption.}
The same issue arises in the small- and large-universe multi-authority attribute-based inner-product functional encryption (MA-ABIPFE) constructions of Datta and Pal~\cite{DP23}. Both retain coordinate-wise masking components, and their security arguments likewise extend share-level matrix normalization to the challenge ciphertext. We show that their additional functional-encryption components do not prevent the corresponding removal of identity-dependent masks. The attacks distinguish two one-dimensional challenge
vectors using key queries whose attribute sets remain
unauthorized even when combined with the corrupt attributes.

\paragraph{Contributions.}
Our contributions are as follows.
\begin{itemize}
	\item We give a probability-one plaintext-recovery attack
	against both the DBDH- and CBDH-based variants
	of~\cite{DKW23}.
	The attack uses a three-row LSSS policy, one corrupt
	authority, and two users, each unauthorized even when
	combined with the corrupt authority's attribute.
	It respects the original static commitment requirements.
	
	\item We identify the invalid transition from share-level
	normalization to ciphertext-level distributional
	equivalence and give an explicit counterexample
	(Section~\ref{sec:rootcause}).
	Appendix~\ref{app:originalproof} reconstructs the original
	security argument and isolates the failed implication.
	
	\item We replace coordinate-wise masks with independent
	full sharings of zero and prove correctness and joint
	payload invariance for both variants.
	We give a direct static DBDH reduction and extend
	the simulation to CBDH using the original hardcore-bit
	convention
	(Sections~\ref{sec:repair} and~\ref{sec:security}).
	Both reductions handle maliciously chosen corrupt-authority
	public keys, all committed key queries, and adaptive
	identity-hash queries, as in Remark~5.1 of~\cite{DKW23}.
	
	\item We extend the attack and zero-sharing modification
	to both multi-authority constructions of Datta and
	Pal~\cite{DP23}
	(Appendix~\ref{app:pkc}).
	We establish correctness and payload invariance for both
	modifications and prove static security for the
	small-universe variant.
	For the large-universe variant, we identify a separate
	simulation gap and close the reduction under augmented $L$-DBDH,
	whose generic bilinear group hardness we establish.
\end{itemize}

\paragraph{Scope and attribution.}
Our targets are the construction in Section~4 and its
CBDH-based variant in Remark~4.1 of the published
Journal of Cryptology version of~\cite{DKW23}.
The attack succeeds in the static game of Section~3.3.
The attack also applies to the extension in
Remark~5.1 of~\cite{DKW23}, which permits adaptive
random-oracle queries while retaining all other
initial commitments.
The attack matrix satisfies the scheme's stated LSSS
input restrictions.
For implementations using a fixed policy compiler,
applicability depends on the matrices produced by that compiler.
The main-body repair retains a fixed maximum matrix width
and an injective row-labeling map, so each authority labels
at most one row.

\paragraph{Organization.}
Section~\ref{sec:related} reviews related work.
Section~\ref{sec:prelim} introduces the notation and static
security model, and Section~\ref{sec:target} recalls the
target construction.
Section~\ref{sec:cryptanalysis} presents the attack and
identifies the invalid matrix substitution.
Sections~\ref{sec:repair} and~\ref{sec:security} give the
zero-sharing repair and its security proof.
Section~\ref{sec:efficiency} analyzes its costs.
Appendix~\ref{app:pkc} extends the analysis to the two
MA-ABIPFE constructions and proves security of their modifications,
using an augmented assumption for the large-universe variant.
Appendix~\ref{app:originalproof} reconstructs the original
MA-ABE security argument.

\section{Related Work}\label{sec:related}

\paragraph{Attribute-based encryption.}
Sahai and Waters~\cite{SW05} introduced ABE, followed by
key-policy and ciphertext-policy realizations~\cite{GPSW06,BSW07}.
Waters~\cite{Waters11} developed expressive CP-ABE constructions
under concrete bilinear assumptions, including a DBDH-based
variant with parameters depending on the maximum sharing width.
More recent work has pursued different expressiveness and
efficiency goals.
For example, Cini and Wee~\cite{CW23} construct lattice-based ABE
for circuits with key size independent of the circuit depth.

\paragraph{Decentralized multi-authority ABE.}
Chase~\cite{Chase07} introduced MA-ABE, and Lewko and
Waters~\cite{LW11} developed decentralized MA-ABE.
Rouselakis and Waters~\cite{RW15} obtained an efficient statically
secure large-universe construction and introduced simulation
techniques relevant to subsequent schemes.
Datta, Komargodski, and Waters studied decentralized constructions
from learning with errors (LWE)~\cite{DKWLWE21} and standard bilinear Diffie--Hellman (BDH) assumptions~\cite{DKW23}.
Their fully adaptive construction~\cite{DKWAdaptive23} additionally
accommodates adaptive authority corruption and key queries.
Waters, Wee, and Wu~\cite{WWW22} construct MA-ABE without random
oracles from evasive LWE.
These works address different combinations of assumptions,
policy classes, and corruption models.

\paragraph{Registered and multi-authority registered ABE.}
Hohenberger et al.~\cite{HLWW23} introduced registered ABE, in
which users generate their own keys and register their public
keys and attributes with a transparent key curator.
The curator aggregates public keys and provides auxiliary
decryption information without holding a master decryption secret.
Zhu et al.~\cite{ZZGQ23} develop registered ABE via predicate
encodings in prime-order bilinear groups, using dual-system
techniques and arguments for handling malicious public keys.
Garg et al.~\cite{GLWW24} reduce the common reference string
(CRS) size and give a partitioning-based construction with
static security whose CRS size is independent of the
attribute-universe size.
Lu, Waters, and Wu~\cite{LWW25} extend the registered setting
to multiple independent key curators, with a pairing-based
construction for a bounded number of users and an
obfuscation-based construction supporting an unbounded number.
These models change who generates secret material and which
public values may be adversarially chosen.

\paragraph{Coordinate-wise constructions and corruption.}
The DBDH-based CP-ABE construction in Section 6 of the full
version of Waters~\cite{Waters11} also contains coordinate-wise
products of matrix entries and sharing randomness.
Such a representation alone therefore does not identify the
vulnerability studied here.
In that construction, the attribute bases are generated by the
trusted global setup, and the security experiment does not
expose their discrete logarithms through authority corruption.
Its selective-security proof embeds the committed challenge
matrix into the public parameters.
By contrast, the corrupt-authority interface in~\cite{DKW23}
exposes exponents that allow the adversary to remove
authority-dependent ciphertext masks and obtain reusable
coordinate-mask encodings.
Our extraction attack exploits this additional information.
It does not directly transfer to the single-authority
construction merely because the two schemes use related
algebraic components.

\paragraph{Matrix normalization and cryptanalysis.}
The full version of the LWE work of Datta, Komargodski,
and Waters~\cite{DKWLWE21} corrects the original matrix-normalization proof of
Rouselakis and Waters~\cite{RW15}.
We use the resulting share-level guarantee and identify a
separate application error: extending that guarantee to
ciphertext components whose coordinate-wise masking
distribution need not be preserved.
Boyen's lattice ABE~\cite{Boyen13} provides another example of
the distinction between a proof gap and an actual security
failure; Agrawal et al.~\cite{ABNXXY21} give a polynomial-time
attack against that construction.
Our work supplies both an explicit attack on the pairing-based
target and a modification of its masking randomness with a
complete static-security proof.

\paragraph{Attribute-based inner-product functional encryption.}
Datta and Pal~\cite{DP23} construct MA-ABIPFE schemes combining
decentralized access control with inner-product evaluation on
vectors of unbounded length.
Their small- and large-universe constructions rely on DBDH
and its parameterized $L$-DBDH variant, respectively;
$L$ indexes the number of slots supplied by the latter assumption.
Both contain coordinate-wise masking components to which our
analysis extends.
Appendix~\ref{app:pkc} gives admissible attacks, corresponding
zero-sharing modifications, and a complete security proof for
the repaired small-universe variant.
The technical analysis uses the numbering of their full
version~\cite{DP23Full}.
For the large-universe variant, we close a separate auxiliary-component
simulation gap under augmented $L$-DBDH; this is a stronger assumption
than the one stated in the source.

\paragraph{Summary of results.}
Table~\ref{tab:results} summarizes our results for both
DKW variants and the two DP constructions. The attacks concern the original schemes;
correctness, payload invariance, and security refer to our
full zero-sharing modifications. For the large-universe
MA-ABIPFE construction, the security reduction uses augmented
$L$-DBDH as defined in Appendix~\ref{app:pkc-augmented}.

\begin{table}[t]
	\caption{Summary of attacks and repair guarantees.}
	\label{tab:results}
	\centering
	\small
	\setlength{\tabcolsep}{5pt}
	\renewcommand{\arraystretch}{1.15}
	\begin{tabular}{@{}lccc@{}}
		\toprule
		Property
		& DKW~\cite{DKW23}
		& DP-SU~\cite{DP23}
		& DP-LU~\cite{DP23} \\
		\midrule
		Original scheme
		& Broken & Broken & Broken \\
		
		Repair correctness
		& Proved & Proved & Proved \\
		
		Payload invariance
		& Proved & Proved & Proved \\
		
		Static security
		& Proved & Proved & Proved \\
		
		Assumption
		& DBDH / CBDH & Asym.\ DBDH & Aug.\ $L$-DBDH \\
		
		Analysis
		& \S\ref{sec:cryptanalysis}--\ref{sec:security}
		& App.~\ref{app:pkc}
		& App.~\ref{app:pkc} \\
		\bottomrule
	\end{tabular}
	
	\smallskip
	\begin{minipage}{\linewidth}
		\footnotesize
		DKW: the DBDH- and CBDH-based MA-ABE variants of
		Datta, Komargodski, and Waters;
        DP-SU/DP-LU: the small-/large-universe MA-ABIPFE
        constructions of Datta and Pal. Asym.: asymmetric.
		The attacks recover plaintexts for DKW and break
		indistinguishability for both DP variants.
		The CBDH result uses the hardcore-bit convention
		of Remark~4.1 of~\cite{DKW23}.
		All remaining rows concern our modifications.
		Security is proved in the random oracle model under the stated
		static models and interface restrictions.
		Aug.: augmented with $\{g_2^{c/\mu_i}\}_{i\in[L]}$.
		The large-universe result uses this stronger assumption;
		see Appendix~\ref{app:pkc-large-proof}.
	\end{minipage}
\end{table}

\section{Preliminaries and Security Model}\label{sec:prelim}
\subsection{Bilinear Groups and Linear Secret Sharing}
Let $\lambda$ be the security parameter. Let $q$ be prime, let $\G$ and $\GT$ be cyclic groups of order $q$, let $g$ generate $\G$, and let $e:\G\times\G\to\GT$ be a non-degenerate symmetric bilinear map. Write $\F$ for the field with $q$ elements and set $\E=e(g,g)$, a generator of $\GT$. All exponents lie in $\F$. DBDH asks whether, given $(g,g^a,g^b,g^c,T)$ for independent uniform $a,b,c\in\F$, the element $T$ is $\E^{abc}$ or an independent uniform element of $\GT$. We use the difference-of-probabilities convention for DBDH distinguishing advantage.

Fix a polynomially bounded setup width $d=d(\lambda)$. Write $[n]=\{1,\ldots,n\}$ and use $\sample$ for uniform sampling. Bold vectors are columns unless transposed. For any matrix $M$, let $M_i$, $M_{i,k}$, and $M_I$ denote row $i$, entry $(i,k)$, and the submatrix of rows indexed by $I$, respectively. Write $\boldsymbol e_k$ for the $k$th standard basis vector of $\F^d$. Write $\mathrm{GL}_d(\F)$ for the invertible $d\times d$ matrices.

For the main-body constructions, an LSSS policy is
$(M,\rho)$, with a sharing matrix $M\in\F^{\ell\times d}$ and an injective row-labeling map $\rho:[\ell]\to\mathcal U$, where $\mathcal U$ is the authority universe. Each authority controls one attribute. Smaller matrices are padded with zero columns to the fixed setup width $d$. For $I\subseteq[\ell]$, authorization means that $\boldsymbol e_1^\top\in\Span\{M_i:i\in I\}$. Reconstruction coefficients $\omega_i$ satisfy $\sum_{i\in I}\omega_iM_i=\boldsymbol e_1^\top$. An unauthorized set admits a vector $\boldsymbol t$ satisfying $t_1=1$ and $M_I\boldsymbol t=0$. Both statements follow by elementary linear algebra.

For a secret $z\in\F$, define the zero-sharing subspace $Z=\{\boldsymbol w\in\F^d:w_1=0\}$ and the affine sharing space $V_z=\{\boldsymbol v\in\F^d:v_1=z\}$. A sharing of $z$ samples $\boldsymbol v$ uniformly from $V_z$; a zero sharing samples a vector uniformly from $Z$.

\subsection{Static Security}\label{sec:static}
We use the batched static model of~\cite{DKW23,RW15}. After receiving global parameters, an adversary commits to: a corrupt authority set $\mathcal C$ and its public keys; a disjoint honest authority set $\mathcal N$; key requests $(\uid,U_{\uid})$, where $\uid\in\{0,1\}^*$ is a user identifier and $U_{\uid}\subseteq\mathcal N$ lists its requested authorities; all random oracle inputs; two distinct messages $m_0,m_1\in\GT$ of equal encoding length; and a challenge policy labeled by $\mathcal C\cup\mathcal N$. All requests for one identity are merged. No oracle responses are given before this commitment. For each queried identity, the rows labeled by $\mathcal C\cup U_{\uid}$ must be unauthorized. We also require the corrupt rows alone to be unauthorized, including when the key-query set is empty.

The challenger returns honestly generated public keys for $\mathcal N$, all requested user keys, the committed random oracle responses, and an encryption of $m_\beta$ for a uniformly chosen challenge bit $\beta\in\{0,1\}$. The adversary outputs its guess $\beta'\in\{0,1\}$, with advantage $|\Pr[\beta'=\beta]-1/2|$. Corrupt public keys may be chosen arbitrarily within their specified groups. The challenger need not know their discrete logarithms. Group membership and policy syntax are checked. The random oracle is $H:\{0,1\}^*\times\{2,\ldots,d\}\to\G$, with an injective encoding of its input pair.

This is an indistinguishability under chosen-plaintext attack (IND-CPA) experiment; indistinguishability (IND) advantage below uses the same success-probability convention. This commitment timing is essential for the reduction below: an identity's hash values are programmed after its complete key-request set is known. The theorem does not extend to adaptive key requests by the same proof.

For the CBDH-based variant recalled below, the same
experiment is used with distinct challenge messages
$m_0,m_1\in\{0,1\}$.
The commitment timing, corruption permissions, and
key-query admissibility conditions remain unchanged.

\paragraph{Extension with adaptive random-oracle queries.}
We also consider the extension in Remark~5.1
of~\cite{DKW23}.
All non-oracle commitments above are fixed before
any replies, but the adversary need not commit to
its random-oracle inputs.
After these commitments, it may query $H$ adaptively,
including after receiving the challenge ciphertext.
All answers are consistent with the hash values used
in key generation.
Key requests remain static; in particular, no
additional attributes may subsequently be requested
for a committed identity.

\section{The Target Construction}\label{sec:target}
We recall the DBDH construction in Section~4 of Datta,
Komargodski, and Waters~\cite{DKW23} and its CBDH-based
variant in Remark~4.1, together with the decryption
calculation and the DBDH simulation in their Section~5.2. We use common notation for the construction, attack, and repair, and isolate the matrix-substitution step in the claimed proof. Throughout, $d$ denotes their fixed maximum sharing width $s_{\max}$.

\subsection{Algorithms and Correctness}

Use the policy notation of Section~\ref{sec:prelim}, and let $\mathcal{ID}=\{0,1\}^*$ be the user-identifier space.

\paragraph{$\mathsf{GlobalSetup}(1^\lambda,d)
	\to\mathsf{GP}$.}
On input the security parameter $\lambda$ and sharing
width $d$, generate prime-order bilinear groups
$(q,\G,\GT,g,e)$ and set $\E=e(g,g)$.
Return the global public parameters
$\mathsf{GP}=(q,\G,\GT,g,e,\E,d)$.
The global parameters specify the random oracle
$H:\mathcal{ID}\times\{2,\ldots,d\}\to\G$.
The message space is $\GT$.

\paragraph{$\mathsf{AuthSetup}(\mathsf{GP},u)
	\to(\pk_u,\msk_u)$.}
On input the global parameters and an authority identifier
$u\in\mathcal U$, sample
$\alpha_u,y_{u,2},\ldots,y_{u,d}\sample\F$
independently and set
the public-key components $A_u=\E^{\alpha_u}$ and $Y_{u,k}=g^{y_{u,k}}$ for $2\le k\le d$.
Return the public key
$\pk_u=(A_u,Y_{u,2},\ldots,Y_{u,d})
\in\GT\times\G^{d-1}$
and master secret key
$\msk_u=(\alpha_u,y_{u,2},\ldots,y_{u,d})\in\F^d$.
Honest authorities generate their keys independently.

\paragraph{$\mathsf{KeyGen}(\mathsf{GP},u,\uid,\msk_u)
	\to\sk_{\uid,u}$.}
On input an authority identifier $u$, a user identifier
$\uid\in\mathcal{ID}$, and the corresponding master secret
key $\msk_u$, return the user secret key
\begin{equation}\label{eq:originalkey}
	\sk_{\uid,u}
	=g^{\alpha_u}\prod_{k=2}^d H(\uid,k)^{y_{u,k}}
	\in\G.
\end{equation}
The key grants attribute $u$ to identity $\uid$.

\paragraph{$\mathsf{Enc}(\mathsf{GP},m,(M,\rho),
	\{\pk_u\}_{u\in\rho([\ell])})\to\mathsf{CT}$.}
On input a message $m\in\GT$, a policy $(M,\rho)$,
and the public key of every authority labeling a policy
row, sample
$\boldsymbol v=(z,v_2,\ldots,v_d)\sample\F^d$,
$x_2,\ldots,x_d\sample\F$, and
$r_1,\ldots,r_\ell\sample\F$, all independently.
For $i\in[\ell]$ and $2\le k\le d$, compute
\begin{equation}\label{eq:original}
\begin{aligned}
 C_0&=m\E^z,\ C_{2,i}=g^{r_i},\
 C_{1,i}=\E^{M_i\boldsymbol v}A_{\rho(i)}^{r_i},
 C_{3,i,k}=Y_{\rho(i),k}^{r_i}g^{M_{i,k}x_k}.
\end{aligned}
\end{equation}
Here $C_0,C_{1,i}\in\GT$ and
$C_{2,i},C_{3,i,k}\in\G$.
Return the ciphertext
\[
\mathsf{CT}
=\bigl((M,\rho),C_0,
\{(C_{1,i},C_{2,i},(C_{3,i,k})_{k=2}^d)\}_{i=1}^{\ell}
\bigr).
\]
The policy header is $(M,\rho)$; the remaining components form the payload. Each scalar $x_k$ is shared across all rows.

\paragraph{$\mathsf{Dec}(\mathsf{GP},\mathsf{CT},\uid,S,
	\{\sk_{\uid,u}\}_{u\in S})\to m'\text{ or }\bot$.}
On input a ciphertext $\mathsf{CT}$, a user identifier $\uid$,
an attribute set $S\subseteq\mathcal U$, and the corresponding
user secret keys $\{\sk_{\uid,u}\}_{u\in S}$,
parse the policy $(M,\rho)$ from $\mathsf{CT}$ and set
$I=\{i\in[\ell]:\rho(i)\in S\}$.
If $\boldsymbol e_1^\top\notin\Span\{M_i:i\in I\}$,
return the failure symbol $\bot$.
Otherwise find coefficients
$\{\omega_i\in\F\}_{i\in I}$ satisfying
$\sum_{i\in I}\omega_iM_i=\boldsymbol e_1^\top$.
For each $i\in I$, compute the row-decryption value
\begin{equation}\label{eq:rowdefinition}
	R_i(\uid)=
	\frac{C_{1,i}\prod_{k=2}^d e(H(\uid,k),C_{3,i,k})}
	{e(\sk_{\uid,\rho(i)},C_{2,i})}
	\in\GT.
\end{equation}
Return the decrypted message
$m'=C_0/\prod_{i\in I}R_i(\uid)^{\omega_i}\in\GT$.

\paragraph{Correctness.}
For honestly generated keys and ciphertexts, substitution
into~\eqref{eq:rowdefinition} cancels the
authority-dependent factors and gives
\begin{equation}\label{eq:row}
	R_i(\uid)
	=\E^{M_i\boldsymbol v}
	\prod_{k=2}^d e(H(\uid,k),g)^{M_{i,k}x_k}.
\end{equation}
For a common identity $\uid$, reconstruction gives
$\sum_{i\in I}\omega_iM_i\boldsymbol v=z$ and
$\sum_{i\in I}\omega_iM_{i,k}=0$ for every $k\ge2$.
Thus all identity-dependent factors cancel,
$\prod_{i\in I}R_i(\uid)^{\omega_i}=\E^z$,
and the returned message is $m'=m$.
The attack below removes these identity-dependent factors
before combining rows associated with different identities.

\paragraph{CBDH-based variant.}
Remark~4.1 of~\cite{DKW23} describes a CBDH-based variant
with message space $\{0,1\}$.
Let $\mathsf{hc}$ denote the hardcore-bit function used
in that variant, distinguished from the identity hash $H$.
Encryption replaces only the message component by
\begin{equation}\label{eq:original-cbdh}
	C_0=m\oplus\mathsf{hc}(\E^z).
\end{equation}
Global setup, authority keys, user keys, and the ciphertext
components $C_{1,i},C_{2,i},C_{3,i,k}$ remain unchanged.
For an authorized set of rows belonging to a common identity,
decryption reconstructs
\[
K=\prod_{i\in I}R_i(\uid)^{\omega_i}=\E^z
\]
and returns $m=C_0\oplus\mathsf{hc}(K)$.
Thus, both variants use the same underlying masking element
$K$ and differ only in how it protects the message.

\section{Cryptanalysis}\label{sec:cryptanalysis}
\subsection{Extracting Coordinate Masks}
For a corrupt row $i$, the adversary may choose a correctly generated authority key and hence know every $y_{\rho(i),k}$. It computes the residual coordinate-mask encoding
\begin{equation}\label{eq:leak}
 D_{i,k}=C_{3,i,k}/C_{2,i}^{y_{\rho(i),k}}=g^{M_{i,k}x_k}.
\end{equation}
If $M_{i,k}\ne0$, exponentiation by $M_{i,k}^{-1}$ gives the reusable mask encoding $X_k=g^{x_k}\in\G$. 

Suppose corrupt rows expose $X_k$ for every coordinate used in some set of available rows. For each such row $j$ with a key for identity $\uid_j$, the adversary obtains the identity-independent share encoding $S_j\in\GT$ by computing
\begin{equation}\label{eq:unbound}
	S_j=R_j(\uid_j)\Big/
	\prod_{\substack{2\le k\le d\\M_{j,k}\ne0}}
	e(H(\uid_j,k),X_k)^{M_{j,k}}
	=\E^{M_j\boldsymbol v}.
\end{equation}
Zero coefficients need no extracted mask. For a corrupt row, dividing $C_{1,i}$ by $e(C_{2,i},g)^{\alpha_{\rho(i)}}$ also gives $S_i$. These shares have no remaining identity binding.

\subsection{A Complete Three-Authority Attack}
Let $A,B,C\in\mathcal U$ be three distinct authority identifiers. Consider the width-three matrix
\begin{equation}\label{eq:attackmatrix}
 M=\begin{pmatrix}1&1&1\\0&1&0\\0&0&1\end{pmatrix},\qquad
 \rho(1)=A,\quad\rho(2)=B,\quad\rho(3)=C.
\end{equation}
This matrix realizes $A\wedge B\wedge C$. All three rows reconstruct $\boldsymbol e_1^\top$ with coefficients $(1,-1,-1)$; no proper subset does. In particular, rows $\{1,2\}$ cannot cancel the third coordinate while retaining first coordinate one, and rows $\{1,3\}$ cannot cancel the second.

\paragraph{Step 0: choose the challenge and the corrupt authority.}
After receiving the global parameters, choose the policy in~\eqref{eq:attackmatrix}, corrupt authority $A$, and run authority setup locally to retain $\alpha_A,y_{A,2},y_{A,3}$. Submit the resulting public key and request honest public keys for $B$ and $C$. No malformed key or dishonest parameter generation is needed.

\paragraph{Step 1: commit to legal key and hash requests.}
Choose distinct identities $\uid_B,\uid_C$. Request only $\sk_{\uid_B,B}$ and $\sk_{\uid_C,C}$, together with the hash values $H(\uid_B,k)$ and $H(\uid_C,k)$ for $k=2,3$. For the DBDH-based construction, submit any distinct
messages $m_0,m_1\in\GT$.
For the CBDH-based variant, submit $m_0=0$ and $m_1=1$.
 All these choices are committed before receiving the replies. The effective sets for the two identities are $\{A,B\}$ and $\{A,C\}$, respectively, because the corrupt authority can issue keys freely. Neither set satisfies the challenge policy.

\paragraph{Step 2: extract the common coordinate masks.}
Receive the requested keys, hash outputs, and challenge ciphertext. Compute $X_2=C_{3,1,2}/C_{2,1}^{y_{A,2}}=g^{x_2}$ and $X_3=C_{3,1,3}/C_{2,1}^{y_{A,3}}=g^{x_3}$. Both coefficients in the corrupt row are one, so not even a field inversion is required in this example.

\paragraph{Step 3: remove identity binding from the honest rows.}
Evaluate $R_2(\uid_B)$ and $R_3(\uid_C)$ using the respective queried keys. Compute
\begin{equation}\label{eq:attackrecover}
\begin{aligned}
 S_1&=C_{1,1}/e(C_{2,1},g)^{\alpha_A}=\E^{z+v_2+v_3},\\
 S_2&=R_2(\uid_B)/e(H(\uid_B,2),X_2)=\E^{v_2},\\
 S_3&=R_3(\uid_C)/e(H(\uid_C,3),X_3)=\E^{v_3}.
\end{aligned}
\end{equation}
\paragraph{Step 4: recover the masking element and distinguish.}
Compute
\begin{equation}\label{eq:attack-mask}
	K=\frac{S_1}{S_2S_3}=\E^z.
\end{equation}
For the DBDH-based construction, recover
$m^*=C_0/K$.
For the CBDH-based variant, recover
$m^*=C_0\oplus\mathsf{hc}(K)$.
In either case, $m^*=m_\beta$.
Output zero if $m^*=m_0$, and one otherwise.
The distinct challenge messages make this decision
unambiguous.

\subsection{Admissibility, Success Probability, and Scope}

\begin{theorem}\label{thm:attack}
	Both the DBDH-based construction and its CBDH-based
	variant, when accepting arbitrary bounded-width LSSS
	matrices with injective labels, are not statically secure.
	For every width bound $d\ge3$, each variant admits a
	polynomial-time adversary with plaintext-recovery
	probability one and IND advantage $1/2$.
\end{theorem}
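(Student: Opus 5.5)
The proof consists of checking that the attack described above (Steps~0--4 with the matrix in~\eqref{eq:attackmatrix}) is a legal, efficient adversary in the static game of Section~\ref{sec:static} and that it always outputs $\beta$. I would organize it into three parts: admissibility, correctness of the recovered mask, and handling of general $d\ge3$ together with both variants.

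\emph{Admissibility.} The policy $(M,\rho)$ has width three, so for $d\ge3$ it is padded with zero columns to width $d$. Padding preserves authorization, and the zero columns contribute $M_{i,k}=0$ for $k\ge4$. The map $\rho$ is injective. The corrupt set is $\mathcal C=\{A\}$ and the honest set is $\mathcal N=\{B,C\}$, so the two are disjoint. The corrupt public key is honestly generated and therefore passes the group-membership checks. The key requests are $U_{\uid_B}=\{B\}$ and $U_{\uid_C}=\{C\}$, and every oracle input is committed in advance. For the unauthorized conditions, I exhibit explicit vectors $\boldsymbol t$ with $t_1=1$ and $M_I\boldsymbol t=0$. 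For rows $\{1,2\}$ (the set $\mathcal C\cup U_{\uid_B}$), take $\boldsymbol t=(1,0,-1,0,\ldots)$. For rows $\{1,3\}$, take $\boldsymbol t=(1,-1,0,\ldots)$. For row $\{1\}$ alone, either vector works. By the preliminaries, each such $\boldsymbol t$ certifies unauthorization. The messages are distinct, with $m_0=0,m_1=1$ in the CBDH case.

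\emph{Correctness of recovery.} Substituting~\eqref{eq:original} into~\eqref{eq:leak} gives $X_2=g^{x_2}$ and $X_3=g^{x_3}$, because $M_{1,2}=M_{1,3}=1$. The identity in~\eqref{eq:row} applied to row~2 gives $R_2(\uid_B)=\E^{v_2}e(H(\uid_B,2),g)^{x_2}$, since $M_{2,3}=0$ and the padded entries vanish. Bilinearity gives $e(H(\uid_B,2),X_2)=e(H(\uid_B,2),g)^{x_2}$, so $S_2=\E^{v_2}$; the argument for $S_3$ is symmetric. For the corrupt row, $C_{1,1}/e(C_{2,1},g)^{\alpha_A}=\E^{M_1\boldsymbol v}=\E^{z+v_2+v_3}$. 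Hence $K=\E^z$ holds identically, for every choice of randomness. Note that~\eqref{eq:row} only uses that the keys are honestly generated for the same identity as the hash values; it does not require the full row set to be authorized. Consequently $C_0/K=m_\beta$ in the DBDH variant and $C_0\oplus\mathsf{hc}(K)=m_\beta$ in the CBDH variant. Since $m_0\ne m_1$, the guess equals $\beta$ with probability one, so $\Pr[\beta'=\beta]=1$ and the advantage is $1-1/2=1/2$.

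\emph{Efficiency and the delicate point.} The adversary performs one authority setup, a constant number of exponentiations and pairings, and polynomially many hash requests (for $k=2,\ldots,d$ only if a full commitment is required). This is polynomial time. The step needing the most care is justifying the use of~\eqref{eq:row} for honest rows under the static game, and confirming that the padded coordinates $k\ge4$ need no extracted mask. For the latter I would invoke the remark after~\eqref{eq:unbound} that zero coefficients need no mask. I would also observe that the attack never asks for $H$ outside the committed inputs, so it remains valid in the Remark~5.1 extension as well.
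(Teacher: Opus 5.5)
Your proposal is correct and follows the paper's own argument: you check admissibility of the committed three-row conjunction attack (your explicit kernel vectors $(1,0,-1)$ and $(1,-1,0)$ make the unauthorizedness checks concrete), show exact recovery of $K=\E^z$ for every choice of encryption coins in both variants, and use zero-column padding for $d>3$. One point should be stated unconditionally rather than hedged: for $d>3$ the hashes $H(\uid_B,k)$ and $H(\uid_C,k)$ with $4\le k\le d$ are genuinely needed, because evaluating $R_i$ pairs them with $C_{3,i,k}$ to cancel the $y_{\rho(i),k}r_i$ terms coming from the key, which is why the paper commits to all $2\le k\le d$.
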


\begin{proof}
	The attack above satisfies the commitment and
	admissibility requirements of both experiments.
	Equations~\eqref{eq:leak}--\eqref{eq:attackrecover}
	involve only the unchanged ciphertext components,
	authority secrets, user keys, and identity hashes.
	They therefore hold for both variants and for every
	choice of encryption coins, including zero exponents.
	Equation~\eqref{eq:attack-mask} recovers the complete
	masking element $K=\E^z$, from which Step~4 recovers
	the challenge plaintext in either variant.
	
	For $d>3$, append zero columns to
	\eqref{eq:attackmatrix} and commit to the additional
	identity-hash inputs for all $2\le k\le d$.
	The additional authority-dependent terms cancel in
	\eqref{eq:row}, while the zero matrix entries contribute
	no coordinate masks.
	Thus, plaintext recovery remains exact and takes
	polynomial time.
	Distinct challenge messages yield guessing probability
	one and IND advantage $1/2$.
	\qed
\end{proof}

\paragraph{Adaptive random-oracle queries.}
The attack also succeeds in the extension described
in Remark~5.1 of~\cite{DKW23}.
That extension permits adaptive random-oracle queries
but keeps the corruption choices, key requests,
challenge policy, and challenge messages committed
at the beginning of the game.
Our adversary already fixes all required hash inputs
in advance and therefore remains admissible without
using the additional adaptivity.
Its plaintext-recovery probability remains one.

\paragraph{Running time.}
At width three, the attack uses two key requests, four identity-hash requests, two source-group exponentiations to extract the masks, and a constant number of pairings and target-group operations to obtain and combine the three unbound shares. Each operation has polynomial cost in the security parameter. Padding to a larger width only adds the corresponding hash terms to the public row calculation, so the running time remains polynomial.  The CBDH-based variant additionally requires one evaluation of $\mathsf{hc}$ and one bitwise XOR.

\begin{proposition}[A sufficient condition for mask extraction]
	\label{prop:mask-coverage}
	Let $(M,\rho)$ be a challenge policy and let
	$J=\rho^{-1}(\mathcal C)$ be its corrupt row indices. Define the set of coordinates exposed by these rows
	\[
	K_{\mathcal C}
	=\{k\in\{2,\ldots,d\}:
	M_{i,k}\ne0\text{ for some }i\in J\}.
	\]
	Suppose the adversary commits to key sets $U_{\uid}$ such
	that $J$ and every $\rho^{-1}(\mathcal C\cup U_{\uid})$
	are unauthorized.
	Assume there is a row set $I$ and coefficients $\omega_i$
	satisfying
	$\sum_{i\in I}\omega_iM_i=\boldsymbol e_1^\top$.
	For every $j\in I\setminus J$, suppose there is an identity
	$\uid_j$ with $\rho(j)\in U_{\uid_j}$ and
	\[
	\{k\in\{2,\ldots,d\}:M_{j,k}\ne0\}
	\subseteq K_{\mathcal C}.
	\]
	Then, for either variant, an adversary choosing honestly
	generated corrupt-authority keys can recover the challenge
	plaintext with probability one in polynomial time.
\end{proposition}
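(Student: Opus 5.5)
The plan generalizes the three-authority attack of Theorem~\ref{thm:attack}. First we check admissibility, then extract masks from the corrupt rows, then unbind every honest row in $I$, and finally recombine. \textbf{Admissibility.} The adversary runs $\mathsf{AuthSetup}$ honestly for every $u\in\mathcal C$, so it knows all $\alpha_u$ and $y_{u,k}$. It commits to the policy $(M,\rho)$, the key sets $U_{\uid}$, and the hash inputs $(\uid_j,k)$ for all $j\in I\setminus J$ and $2\le k\le d$, together with two distinct messages ($m_0=0$, $m_1=1$ in the CBDH variant). By hypothesis, $J$ and each $\rho^{-1}(\mathcal C\cup U_{\uid})$ are unauthorized, so the adversary is admissible in the static game. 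It remains admissible under the extension of Remark~5.1, since it never uses adaptivity there.

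\textbf{Mask extraction.} For each $k\in K_{\mathcal C}$, fix some $i\in J$ with $M_{i,k}\ne0$. Apply~\eqref{eq:leak} to obtain $D_{i,k}=g^{M_{i,k}x_k}$, and raise it to $M_{i,k}^{-1}$ to get $X_k=g^{x_k}$. This uses only public ciphertext components and the known $y_{\rho(i),k}$, so it holds for every choice of encryption coins.

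\textbf{Unbinding shares.} For corrupt rows $i\in I\cap J$, set $S_i=C_{1,i}/e(C_{2,i},g)^{\alpha_{\rho(i)}}=\E^{M_i\boldsymbol v}$. For $j\in I\setminus J$, the key $\sk_{\uid_j,\rho(j)}$ is available because $\rho(j)\in U_{\uid_j}$, so the adversary computes $R_j(\uid_j)$ via~\eqref{eq:rowdefinition}. By the correctness calculation~\eqref{eq:row}, this equals $\E^{M_j\boldsymbol v}\prod_k e(H(\uid_j,k),g)^{M_{j,k}x_k}$. Every $k$ with $M_{j,k}\ne0$ lies in $K_{\mathcal C}$, so the corresponding $X_k$ is known, and~\eqref{eq:unbound} gives $S_j=\E^{M_j\boldsymbol v}$. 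The main point to verify carefully is this step: the identity-dependent factor involves only coordinates with nonzero entries, and those are exactly the coordinates covered by the coverage hypothesis. No other masks are needed, whatever identity is used for each row.

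\textbf{Recombination.} Compute $K=\prod_{i\in I}S_i^{\omega_i}=\E^{\sum_i\omega_iM_i\boldsymbol v}=\E^{z}$, using $\sum_{i\in I}\omega_iM_i=\boldsymbol e_1^\top$. Then recover $m^*=C_0/K$ in the DBDH variant, or $m^*=C_0\oplus\mathsf{hc}(K)$ in the CBDH variant. Output the bit $b$ with $m_b=m^*$; this is well defined because the messages are distinct. Each step is an exact identity, so recovery holds with probability one. The running time is polynomial, since the adversary performs $O(\ell d)$ exponentiations, pairings, and field inversions, and the coefficients $\omega_i$ are given (or found by Gaussian elimination).
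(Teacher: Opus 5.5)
Your proposal is correct and matches the paper's proof essentially step for step. You extract $X_k=g^{x_k}$ from a chosen corrupt row for each $k\in K_{\mathcal C}$, use the key of $\uid_j$ and the coverage hypothesis to unbind each honest row $j\in I\setminus J$, strip the corrupt rows using $\alpha_{\rho(i)}$, and recombine with the $\omega_i$ to obtain $\E^{v_1}$ and then the plaintext in either variant; your explicit admissibility check (committing all hash inputs $(\uid_j,k)$ for $2\le k\le d$ and distinct challenge messages) is the same as the paper's closing remark on commitment and admissibility.
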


\begin{proof}
	For each $k\in K_{\mathcal C}$, choose a corrupt row $i(k)$
	with $M_{i(k),k}\ne0$. Using its known authority exponent,
	compute
	\[
	X_k=
	\left(
	C_{3,i(k),k}/C_{2,i(k)}^{y_{\rho(i(k)),k}}
	\right)^{M_{i(k),k}^{-1}}
	=g^{x_k}.
	\]
	For each $j\in I\setminus J$, use the queried key for
	$\uid_j$ to obtain
	\[
	S_j=
	R_j(\uid_j)\Big/
	\prod_{k\in K_{\mathcal C}}
	e(H(\uid_j,k),X_k)^{M_{j,k}}
	=\E^{M_j\boldsymbol v}.
	\]
	The support condition ensures that every remaining
	identity-dependent mask is removed.
	For $i\in I\cap J$, compute
	$S_i=C_{1,i}/e(C_{2,i},g)^{\alpha_{\rho(i)}}
	=\E^{M_i\boldsymbol v}$.
	Consequently, the adversary recovers
	$
	K=\prod_{i\in I}S_i^{\omega_i}=\E^{v_1}.
	$
	It obtains the plaintext as $C_0/K$ in the DBDH-based
	construction and as $C_0\oplus\mathsf{hc}(K)$ in the
	CBDH-based variant.
	All keys and required hash inputs can be committed before
	any responses, and the stated unauthorizedness conditions
	make these requests admissible.
	The calculation holds for every choice of encryption coins.
	\qed
\end{proof}

\paragraph{Generality and limits.}
Proposition~\ref{prop:mask-coverage} gives a sufficient
condition, not a necessary characterization of insecurity.
It already uses honestly generated corrupt-authority keys,
so the attack also applies when the model permits maliciously
formed keys.
Its direct target is the matrix-level interface.
Applying it to a restricted policy compiler additionally
requires establishing that the compiler can output a matrix
satisfying the stated conditions.

\subsection{Why the Matrix Substitution Fails}\label{sec:rootcause}
Let
\begin{equation}\label{eq:counterbasis}
 P=\begin{pmatrix}1&0&0\\-1&1&0\\0&-1&1\end{pmatrix},\qquad
 N=MP=\begin{pmatrix}0&0&1\\-1&1&0\\0&-1&1\end{pmatrix}.
\end{equation}
The matrix $P$ is invertible and satisfies $\boldsymbol e_1^\top P=\boldsymbol e_1^\top$. Thus $M\boldsymbol v$ and $N\widetilde{\boldsymbol v}$ have the same distribution when their sharing vectors are uniform with the same first coordinate. Nevertheless, with fixed authority keys the original residual $D_{1,2}$ equals $g^{x_2}$ under $M$, whereas it is always the identity under $N$. Testing whether $D_{1,2}=1$ distinguishes the two payload distributions with probability gap $1-1/q$.

A complete share is preserved by the coupling
$\boldsymbol v_M=P\boldsymbol v_N$.
A coordinate mask instead restricts its randomness to
$\Span\{\boldsymbol e_k\}$, which a general
secret-preserving basis change need not preserve.
Thus share-level invariance does not establish
ciphertext-level invariance. 
We address this failure by replacing coordinate-wise masks
with full zero sharings, whose joint payload invariance is
established in Lemma~\ref{lem:invariant}.

\section{The Full Zero-Sharing Repair}\label{sec:repair}
\subsection{Construction and Correctness}
We first repair the DBDH-based construction.
Global setup, authority setup, and key generation are
exactly as in Section~\ref{sec:target}. To encrypt $m$ under $(M,\rho)$, sample $\boldsymbol v\sample\F^d$, $r_i\sample\F$ for $i\in[\ell]$, and independently $\boldsymbol w^{(k)}\sample Z$ for every $k\in\{2,\ldots,d\}$. Output the same policy and component layout, with
\begin{equation}\label{eq:repair}
\begin{aligned}
 C_0&=m\E^{v_1},C_{1,i}=\E^{M_i\boldsymbol v}A_{\rho(i)}^{r_i},C_{2,i}=g^{r_i},
 C_{3,i,k}=Y_{\rho(i),k}^{r_i}g^{M_i\boldsymbol w^{(k)}}.
\end{aligned}
\end{equation}
The index $k$ selects a hash/key component; it no longer selects a single coordinate of the sharing matrix. A ciphertext row still contains one source-group element per such index. Decryption uses exactly the same formula as the target scheme.

\begin{theorem}[Correctness]\label{thm:correct}
The repaired construction decrypts correctly for every authorized set of rows whose keys belong to one identity.
\end{theorem}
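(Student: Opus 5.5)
The argument is a direct substitution into the unchanged decryption formula~\eqref{eq:rowdefinition}, mirroring the correctness calculation for the target scheme; the only new ingredient is that reconstruction now cancels full zero sharings rather than single coordinates. Fix an identity $\uid$, an attribute set $S$, and the authorized row set $I=\{i:\rho(i)\in S\}$ with coefficients $\omega_i$ satisfying $\sum_{i\in I}\omega_iM_i=\boldsymbol e_1^\top$. Since $\rho$ is injective, each $i\in I$ has a well-defined key $\sk_{\uid,\rho(i)}$.

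The first step evaluates each row value $R_i(\uid)$ for the repaired ciphertext~\eqref{eq:repair}. The pairing factor is
\[
e(H(\uid,k),C_{3,i,k})=e(H(\uid,k),g)^{y_{\rho(i),k}r_i}\,e(H(\uid,k),g)^{M_i\boldsymbol w^{(k)}}.
\]
Dividing by $e(\sk_{\uid,\rho(i)},C_{2,i})=\E^{\alpha_{\rho(i)}r_i}\prod_k e(H(\uid,k),g)^{y_{\rho(i),k}r_i}$ cancels both $A_{\rho(i)}^{r_i}$ and every authority-dependent hash term, exactly as in the derivation of~\eqref{eq:row}. This leaves
\[
R_i(\uid)=\E^{M_i\boldsymbol v}\prod_{k=2}^d e(H(\uid,k),g)^{M_i\boldsymbol w^{(k)}}.
\]
The cancellation uses only bilinearity and the key formula~\eqref{eq:originalkey}, which the repair leaves untouched.

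The second step combines the rows with the coefficients $\omega_i$. Linearity gives $\sum_{i\in I}\omega_iM_i\boldsymbol v=\boldsymbol e_1^\top\boldsymbol v=v_1$. For each $k$, it also gives $\sum_{i\in I}\omega_iM_i\boldsymbol w^{(k)}=w^{(k)}_1=0$, because $\boldsymbol w^{(k)}\in Z$. Because every row uses the same identity $\uid$, the base $e(H(\uid,k),g)$ is common across rows, so each product over $i$ collapses to $e(H(\uid,k),g)^0=1$. Hence $\prod_{i\in I}R_i(\uid)^{\omega_i}=\E^{v_1}$, and $C_0/\E^{v_1}=m$.

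For the CBDH-based variant the same computation yields $K=\E^{v_1}$, and $C_0\oplus\mathsf{hc}(K)=m$. There is no real obstacle here. The one point to state explicitly is that the common-identity hypothesis is what makes the bases agree, since without it the zero sharings would not cancel. The argument also holds for every choice of encryption coins, including zero exponents, so correctness is perfect.
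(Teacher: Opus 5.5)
Your proposal is correct and follows the paper's proof: substituting the repaired ciphertext and the unchanged key into the row formula gives $R_i(\uid)=\E^{M_i\boldsymbol v}\prod_{k=2}^d e(H(\uid,k),g)^{M_i\boldsymbol w^{(k)}}$, and the reconstruction coefficients together with $w^{(k)}_1=0$ then reduce the product to $\E^{v_1}$. Your treatment of the CBDH variant also matches the paper's remark immediately after the theorem.
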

\begin{proof}
Substituting~\eqref{eq:repair} and the unchanged key into the row expression cancels all $y_{\rho(i),k}r_i$ terms and gives $R_i(\uid)=\E^{M_i\boldsymbol v}\prod_{k=2}^d e(H(\uid,k),g)^{M_i\boldsymbol w^{(k)}}$. Therefore
\[
 \prod_{i\in I}R_i(\uid)^{\omega_i}
 =\E^{(\sum_i\omega_iM_i)\boldsymbol v}
 \prod_{k=2}^d e(H(\uid,k),g)^{(\sum_i\omega_iM_i)\boldsymbol w^{(k)}}
 =\E^{v_1}.
\]
The last equality uses $w^{(k)}_1=0$ for each $k$. Dividing $C_0$ by this product recovers $m$.\qed
\end{proof}
\paragraph{Application to the CBDH-based variant.}
The same replacement of $C_{3,i,k}$ applies to the
CBDH-based variant, while retaining
$C_0=m\oplus\mathsf{hc}(\E^{v_1})$.
The correctness calculation above still reconstructs
$K=\E^{v_1}$, after which decryption returns
$m=C_0\oplus\mathsf{hc}(K)$.
The coupling in Lemma~\ref{lem:invariant} also preserves
this message component, since it preserves $v_1$.
Thus, the modification preserves correctness and
restores joint payload invariance for both variants.
Section~\ref{sec:security} establishes static security
of both repaired variants.

\subsection{Joint Payload Invariance}

\begin{lemma}[Joint payload invariance]\label{lem:invariant}
Let $N=MP$, with $P\in\mathrm{GL}_d(\F)$ satisfying $\boldsymbol e_1^\top P=\boldsymbol e_1^\top$. For fixed public keys, fixed $m,z$, and fixed row labels, the repaired ciphertext payloads generated using $M$ and $N$ have identical distributions conditional on $v_1=z$. The public policy header is kept equal to $(M,\rho)$ in this comparison.
\end{lemma}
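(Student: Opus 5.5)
The plan is to exhibit an explicit bijective coupling of the encryption randomness under which the two payloads coincide pointwise. Fix the public keys $\{\pk_u\}$, the message $m$, the secret $z$ and the labels $\rho$. The payload produced with matrix $M$ is a deterministic function $F_M(\boldsymbol v,\boldsymbol r,(\boldsymbol w^{(k)})_{k})$ of the sharing vector $\boldsymbol v\in V_z$, the row randomness $\boldsymbol r=(r_1,\ldots,r_\ell)$ and the zero sharings $\boldsymbol w^{(k)}\in Z$, via~\eqref{eq:repair}. The same holds for $F_N$. The coins are independent and uniform on $V_z\times\F^\ell\times Z^{d-1}$. The claim therefore follows once we find a bijection $\Phi$ of this set onto itself with $F_N=F_M\circ\Phi$, since a bijection of a finite set preserves the uniform distribution.

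\textbf{Step 1 (the map).} We set $\Phi(\boldsymbol v,\boldsymbol r,(\boldsymbol w^{(k)})_k)=(P\boldsymbol v,\boldsymbol r,(P\boldsymbol w^{(k)})_k)$. Since $\boldsymbol e_1^\top P=\boldsymbol e_1^\top$, the first coordinate of $P\boldsymbol u$ equals $u_1$ for every $\boldsymbol u$. Hence $P$ maps $V_z$ into $V_z$ and $Z$ into $Z$. Because $P$ is invertible, with $\boldsymbol e_1^\top P^{-1}=\boldsymbol e_1^\top$, these restrictions are bijections, and so $\Phi$ is a bijection of the product set. We apply the single matrix $P$ simultaneously to $\boldsymbol v$ and to every $\boldsymbol w^{(k)}$, and we leave $\boldsymbol r$ unchanged. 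Independence is preserved because $\Phi$ acts coordinatewise on the product.

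\textbf{Step 2 (pointwise equality).} We compare the components one by one.
\begin{itemize}
\item $C_0=m\E^{v_1}$: this component is unchanged, since $(P\boldsymbol v)_1=v_1=z$. The same holds for $m\oplus\mathsf{hc}(\E^{v_1})$ in the CBDH variant.
\item $C_{2,i}=g^{r_i}$: this is identical because $\boldsymbol r$ is untouched.
\item $C_{1,i}$: we have $N_i\boldsymbol v=M_iP\boldsymbol v$, and $A_{\rho(i)}^{r_i}$ uses the same label and the same $r_i$.
\item $C_{3,i,k}$: we have $N_i\boldsymbol w^{(k)}=M_i(P\boldsymbol w^{(k)})$, and the factor $Y_{\rho(i),k}^{r_i}$ is unchanged.
\end{itemize}
Thus $F_N(\text{coins})=F_M(\Phi(\text{coins}))$ for every choice of coins. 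Finally, the header is fixed to $(M,\rho)$ in both experiments by hypothesis, so it contributes nothing.

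\textbf{Main obstacle.} The only delicate point is the contrast with Section~\ref{sec:rootcause}. There the coordinate masks $x_k\boldsymbol e_k$ range over $\Span\{\boldsymbol e_k\}$, which $P$ need not preserve. The crux here is that the mask randomness for each index $k$ ranges over the whole subspace $Z$, and $Z$ is $P$-invariant. So the step we must argue carefully is $PZ=Z$, that is, that $P$ maps $Z$ onto itself bijectively. This follows from $\boldsymbol e_1^\top P=\boldsymbol e_1^\top$ together with invertibility, and everything else is direct substitution. We should also note that malicious corrupt public keys cause no problem. The argument never uses discrete logarithms of $A_u$ or $Y_{u,k}$: the factors depending on them are identical functions of $(\rho(i),r_i)$ on both sides.
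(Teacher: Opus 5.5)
Your proposal is correct and is essentially the paper's own proof. The paper uses the same coupling $\boldsymbol v_M=P\boldsymbol v_N$, $\boldsymbol w_M^{(k)}=P\boldsymbol w_N^{(k)}$ with identical $r_i$, notes that $P$ is a bijection on $V_z$ and on $Z$, and concludes that every payload component, including $C_0$, coincides; your bijection $\Phi$ with $F_N=F_M\circ\Phi$ is just a more explicit statement of that argument.
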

\begin{proof}
Couple the randomness for $M$ and $N$ by $\boldsymbol v_M=P\boldsymbol v_N$, $\boldsymbol w_M^{(k)}=P\boldsymbol w_N^{(k)}$, and identical $r_i$. The map $P$ is a bijection on $V_z$ and on $Z$. It preserves independent uniform sampling on the product $V_z\times Z^{d-1}$. Moreover, $M\boldsymbol v_M=N\boldsymbol v_N$ and $M\boldsymbol w_M^{(k)}=N\boldsymbol w_N^{(k)}$ simultaneously for every $k$. Every payload component is therefore equal under this coupling, including $C_0$. Keys and hash values need no transformation. The conclusion holds for arbitrary fixed valid public-key group elements, without knowing their logarithms.\qed
\end{proof}

The simulator uses this identity to compute the payload
through $N$ while retaining the originally committed
policy header $(M,\rho)$.

\subsection{Corrupt-Row Normalization}
\begin{lemma}[Corrupt-row normalization]\label{lem:normalize}
Let $J=\rho^{-1}(\mathcal C)$ be unauthorized, let $c_0=\rank(M_J)$, and put $h=d-c_0$. One can efficiently find an invertible $P$ with $\boldsymbol e_1^\top P=\boldsymbol e_1^\top$ such that $N=MP$ satisfies $N_{i,k}=0$ for all $i\in J$ and $k\le h$. Its corrupt-row span is exactly $\{0\}^h\times\F^{c_0}$, and $M,N$ have the same authorized row subsets.
\end{lemma}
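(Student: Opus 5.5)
The plan is to build $P$ from a basis of $\F^d$ adapted to the row space of the corrupt rows. Let $W=\Span\{M_i^\top:i\in J\}\subseteq\F^d$, so $\dim W=c_0$. We want $N_J=M_JP$ to vanish on the first $h$ columns. Equivalently, the first $h$ columns of $P$ should span a subspace of $W^\perp=\{\boldsymbol x:M_J\boldsymbol x=0\}$, which has dimension $h=d-c_0$. The constraint $\boldsymbol e_1^\top P=\boldsymbol e_1^\top$ says that the first row of $P$ is $\boldsymbol e_1^\top$. Hence column $1$ of $P$ has first entry $1$, and every other column has first entry $0$.

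First, I would use unauthorizedness of $J$. By the fact recalled in Section~\ref{sec:prelim}, there is a vector $\boldsymbol t$ with $t_1=1$ and $M_J\boldsymbol t=0$. Take $\boldsymbol p_1=\boldsymbol t$ as the first column. Next, consider $W^\perp\cap Z$. Since $\boldsymbol t\in W^\perp\setminus Z$, the linear functional $\boldsymbol x\mapsto x_1$ is nonzero on $W^\perp$. Therefore $W^\perp\cap Z$ has dimension $h-1$, and $W^\perp=\Span\{\boldsymbol t\}\oplus(W^\perp\cap Z)$. Choose a basis $\boldsymbol p_2,\ldots,\boldsymbol p_h$ of $W^\perp\cap Z$ by Gaussian elimination.

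Then extend $\boldsymbol p_1,\ldots,\boldsymbol p_h$ to a basis of $\F^d$ using vectors $\boldsymbol p_{h+1},\ldots,\boldsymbol p_d\in Z$. This is possible because $Z$ together with $\boldsymbol t$ spans $\F^d$. Extending a basis of $W^\perp\cap Z$ to a basis of $Z$ gives $d-1-(h-1)=c_0$ further vectors. These, together with $\boldsymbol t$ and $\boldsymbol p_2,\ldots,\boldsymbol p_h$, form a basis of $\F^d$. The resulting $P=[\boldsymbol p_1\cdots\boldsymbol p_d]$ is invertible and has first row $\boldsymbol e_1^\top$, because $t_1=1$ and all other columns lie in $Z$. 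It also satisfies $N_{i,k}=M_i\boldsymbol p_k=0$ for $i\in J$ and $k\le h$. Every step is Gaussian elimination over $\F$, so $P$ is efficiently computable.

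For the span claim, $N_J=M_JP$ has rank $c_0$ because $P$ is invertible. Its rows lie in $\{0\}^h\times\F^{c_0}$, which also has dimension $c_0$, so the two spaces are equal. For authorized sets, fix any $I$. Since $\boldsymbol e_1^\top P=\boldsymbol e_1^\top$ implies $\boldsymbol e_1^\top P^{-1}=\boldsymbol e_1^\top$, we get
\[
\sum_{i\in I}\omega_iM_i=\boldsymbol e_1^\top \iff \sum_{i\in I}\omega_iN_i=\boldsymbol e_1^\top P=\boldsymbol e_1^\top .
\]
Thus $I$ is authorized under $M$ if and only if it is authorized under $N$, with the same coefficients $\omega_i$.

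The main obstacle is the dimension bookkeeping in the middle step. We must show that the first $h$ columns can be chosen inside $W^\perp$ while the remaining first-row constraints still hold. This is exactly where unauthorizedness is needed: without $\boldsymbol t$, $W^\perp$ could lie inside $Z$, and no valid first column would exist.
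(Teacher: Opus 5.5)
Your proof is correct and follows essentially the same route as the paper. Both take $\boldsymbol p_1\in\ker M_J$ with first coordinate one, choose a basis of $\ker M_J\cap Z$, and complete it to a basis of $Z$; both then obtain the corrupt-row span from a rank count and the preservation of authorized sets by multiplying by $P$ or $P^{-1}$.
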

\begin{proof}
Let $K=\ker M_J$, of dimension $h$. Unauthorizedness gives $\boldsymbol p_1\in K$ with first coordinate one. Choose $\boldsymbol p_2,\ldots,\boldsymbol p_h$ as a basis of $K\cap Z$, and complete these to a basis of $Z$ with $\boldsymbol p_{h+1},\ldots,\boldsymbol p_d$. Taking these vectors as the columns of $P$ gives an invertible matrix with the required first row. Its first $h$ columns lie in $K$, so the corrupt rows of $N$ vanish on these coordinates. Their rank remains $c_0$, hence they span the entire trailing subspace. Finally, $\sum_i\omega_iM_i=\boldsymbol e_1^\top$ if and only if $\sum_i\omega_iN_i=\boldsymbol e_1^\top$, by multiplication by $P$ or $P^{-1}$.\qed
\end{proof}

The normalization lemma is inherited from~\cite{RW15,DKWLWE21,DKW23}; its basis-extension proof is included for completeness. Lemma~\ref{lem:invariant} supplies the additional ciphertext-level guarantee.

\section{Static Security of the Repair}\label{sec:security}
The repair changes only the masking terms in $C_{3,i,k}$,
preserving the authority keys, user keys, and decryption
algorithm.
Lemma~\ref{lem:invariant} supplies the ciphertext-distribution
guarantee missing from the original proof strategy.
The following reduction establishes static security of
the repaired DBDH-based construction by combining this
guarantee with explicit simulations of public keys,
identity hashes, and user keys.

\begin{theorem}
	Fix a polynomially bounded width parameter $d=d(\lambda)$.
	Under the DBDH assumption, the repaired construction is
	statically IND-CPA secure in the random oracle model defined
	in Section~\ref{sec:static} for all LSSS challenge policies $(M,\rho)$
	having at most $d$ columns and an injective row-labeling
	function $\rho$, even when corrupt authorities choose
	arbitrary valid public keys.
\end{theorem}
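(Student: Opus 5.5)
The reduction goes from a DBDH distinguisher $\mathcal B$, given $(g,g^a,g^b,g^c,T)$, to the static adversary. The plan is to embed a hidden shift into the randomness of a normalized ciphertext. $\mathcal B$ receives the adversary's commitments: $\mathcal C$ with arbitrary corrupt public keys, $\mathcal N$, the key requests, the challenge policy $(M,\rho)$ and $m_0,m_1$. It applies Lemma~\ref{lem:normalize} to $J=\rho^{-1}(\mathcal C)$ and obtains $P$ and $N=MP$ whose corrupt rows vanish on the first $h\ge1$ coordinates and span $\{0\}^h\times\F^{c_0}$. By Lemma~\ref{lem:invariant}, $\mathcal B$ may compute the whole payload through $N$ while publishing the header $(M,\rho)$; this is exactly where the zero-sharing repair is used. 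In particular, $N_{i,1}=0$ for every $i\in J$.

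\textbf{Identity vectors.} For each committed identity $\uid$, the rows labeled by $\mathcal C\cup U_{\uid}$ are unauthorized for $N$, so there is $\boldsymbol t_{\uid}$ with $t_{\uid,1}=1$ that is orthogonal to all of them. Orthogonality to the corrupt rows forces $t_{\uid,k}=0$ for $k>h$.

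\textbf{Honest authorities.} For an honest authority $u=\rho(i)$, $\mathcal B$ sets
\[\alpha_u=ab\,N_{i,1}+\alpha_u',\qquad y_{u,k}=a\,N_{i,k}+y_{u,k}' \text{ for } k\le h,\qquad y_{u,k}=y_{u,k}' \text{ for } k>h.\]
Honest authorities not in the policy get fully random keys. All public keys are computable from $g^a$, $g^b$ and $e(g^a,g^b)$.

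\textbf{Hash programming.} $\mathcal B$ programs $H(\uid,k)=g^{b\,t_{\uid,k}+h_{\uid,k}}$ with fresh uniform $h_{\uid,k}$. Every other point, including adaptive post-commitment queries, gets $g^{h}$ for fresh uniform $h$. The outputs are uniform, and consistency holds because each identity's $\boldsymbol t_{\uid}$ is fixed at commitment time.

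\textbf{User keys.} The exponent of $\sk_{\uid,u}$ expands to
\[ab\,(N_i\boldsymbol t_{\uid})+a\sum_{k\le h}N_{i,k}h_{\uid,k}+b\sum_k y'_{u,k}t_{\uid,k}+\alpha'_u+\sum_k y'_{u,k}h_{\uid,k}.\]
The $ab$-term is zero because $N_i\boldsymbol t_{\uid}=0$ for $u\in U_{\uid}$, and the remaining terms are computable from $g^a$ and $g^b$.

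\textbf{Challenge ciphertext.} $\mathcal B$ implicitly sets $v_1=-abc$ and $\boldsymbol v=-abc\,\boldsymbol e_1+\boldsymbol v'$ with $\boldsymbol v'\in Z$ uniform. For $2\le k\le h$ it sets $\boldsymbol w^{(k)}=-ac\,\boldsymbol e_k+\boldsymbol w'^{(k)}$, and $\boldsymbol w^{(k)}=\boldsymbol w'^{(k)}$ otherwise, with $\boldsymbol w'^{(k)}\in Z$ uniform. Since $k\ge2$, each $\boldsymbol w^{(k)}$ is still uniform on $Z$, and $\boldsymbol v$ is uniform given $v_1$.

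\emph{Honest rows.} Take $r_i=c+r_i'$. Then $C_{1,i}$ has exponent $abN_{i,1}r_i'+\alpha'_u c+\alpha'_u r_i'+N_i\boldsymbol v'$, because the $abc\,N_{i,1}$ terms cancel. Likewise $C_{3,i,k}$ has exponent $aN_{i,k}r_i'+c\,y'_{u,k}+r_i'y'_{u,k}+N_i\boldsymbol w'^{(k)}$ for $k\le h$, because the $ac\,N_{i,k}$ terms cancel. For $k>h$ no unknown term appears.

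\emph{Corrupt rows.} Take $r_i$ uniform and known. Because $N_{i,k}=0$ for $k\le h$, the shares are $N_i\boldsymbol v'$ and $N_i\boldsymbol w'^{(k)}$. $\mathcal B$ raises the adversary's group elements to known exponents, so no discrete logarithms of the corrupt keys are needed.

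\textbf{Message component and output.} $\mathcal B$ sets $C_0=m_\beta T^{-1}$. If $T=\E^{abc}$, the simulation is distributed exactly as the real game after the Lemma~\ref{lem:invariant} coupling, conditioned on the uniform exponents $a,b,c$ being such that the embedded randomness is uniform. If $T$ is uniform, $C_0$ is independent of $\beta$. Hence the DBDH advantage equals the adversary's IND advantage.

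The main obstacle is making sure every source-group element contains at most one unknown exponent. Without the hidden $-ac\,\boldsymbol e_k$ shifts, $Y^{r_i}$ would contain $g^{ac}$. These shifts would in turn break the corrupt rows for $k>h$. This is why I restrict both the shifts and the $a$-dependence of $y_{u,k}$ to $k\le h$, and why I must check that $\boldsymbol t_{\uid}$ is supported on the first $h$ coordinates. That support property is exactly the content of the span statement in Lemma~\ref{lem:normalize}.
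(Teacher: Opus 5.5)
Your proposal is correct and is essentially the paper's reduction. It uses the same corrupt-row normalization, the same honest-key embedding, and the same hash programming with $\boldsymbol t_{\uid}$ supported on the first $h$ coordinates. The key cancellation via $N_i\boldsymbol t_{\uid}=0$ and the challenge shifts $-abc\,\boldsymbol e_1$, $-ac\,\boldsymbol e_k$ for $k\le h$, and $r_i=c$ on honest rows also match, with your primed coins playing the role of the paper's re-randomization offsets.

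There is one small discrepancy. Taking $\boldsymbol v'\in Z$ leaves $v_1=-abc$, which is uniform only when $a,b\neq0$, so your real-case simulation is within statistical distance less than $2/q$ rather than exact. The paper avoids this loss, and gets advantage exactly $\epsilon$, by sampling $\delta\boldsymbol v$ from all of $\F^d$ and setting $C_0=m_\beta T^{-1}\E^{\delta v_1}$.
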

\begin{proof}
Let $\mathcal A$ be a static adversary with advantage
$\epsilon$ in the game of Section~\ref{sec:static}. Let $(g,g^a,g^b,g^c,T)$ be the DBDH input. The reduction uses its group as the global parameters. After receiving the static commitments, it obtains $N=MP$ from Lemma~\ref{lem:normalize}. All payload calculations below use $N$, but the ciphertext header remains $(M,\rho)$. Write $J=\rho^{-1}(\mathcal C)$ and $h=d-\rank(M_J)$.

\paragraph{Honest public keys.}
For an honest authority $u=\rho(i)$ appearing in the challenge, choose independent uniform $\alpha'_u,y'_{u,2},\ldots,y'_{u,d}\in\F$ and implicitly set
\begin{equation}\label{eq:embed}
 \alpha_u=\alpha'_u+abN_{i,1},\qquad
 y_{u,k}=\begin{cases}y'_{u,k}+aN_{i,k},&2\le k\le h,\\y'_{u,k},&h<k\le d.\end{cases}
\end{equation}
Publish $A_u=\E^{\alpha'_u}e(g^a,g^b)^{N_{i,1}}$ and $Y_{u,k}=g^{y'_{u,k}}(g^a)^{N_{i,k}}$ for $k\le h$, and $Y_{u,k}=g^{y'_{u,k}}$ otherwise. Injectivity of $\rho$ makes $i$ well-defined. Honest authorities absent from the challenge are generated normally, retaining their exponents. Corrupt public keys are used as supplied. Conditional on $a,b$, all honest exponent tuples are independent uniform translations of fresh randomness.

\paragraph{Random oracle.}
For each identity with committed key set $U_{\uid}$, put
$I_{\uid}=\rho^{-1}(\mathcal C\cup U_{\uid})$ and choose
$\boldsymbol t_{\uid}\in\F^d$ satisfying
$
N_{I_{\uid}}\boldsymbol t_{\uid}=0,
\ t_{\uid,1}=1.
$
Such a vector exists because $I_{\uid}$ is unauthorized
for $M$ and hence also for $N$ by
Lemma~\ref{lem:normalize}.
Since $J\subseteq I_{\uid}$ and the rows of $N_J$ span
$\{0\}^h\times\F^{d-h}$, orthogonality forces
$t_{\uid,k}=0$ for every $k>h$.
For an identity without key requests, take
$\boldsymbol t_{\uid}=\boldsymbol e_1$, which is orthogonal
to all corrupt rows of $N$.

For each identity and coordinate choose independent uniform $\eta_{\uid,k}\in\F$ and return
\begin{equation}\label{eq:hash}
 H(\uid,k)=\begin{cases}(g^b)^{t_{\uid,k}}g^{\eta_{\uid,k}},&2\le k\le h,\\g^{\eta_{\uid,k}},&h<k\le d.\end{cases}
\end{equation}
Maintain one table entry per input and use the same entry inside key generation. This also defines any hashes needed by requested keys but not explicitly requested as oracle outputs. Conditional on $b$ and the committed request sets, these are independent uniform elements of $\G$. Programming all coordinates of each identity together creates no distributional dependence because their $\eta$ values are independent.

\paragraph{Secret keys.}
For a queried honest authority absent from the challenge, compute the key normally using its known exponents and the programmed hash values. For $u=\rho(i)$ in the challenge and $u\in U_{\uid}$, output
\begin{equation}\label{eq:simkey}
 g^{\alpha'_u}
 \prod_{k=2}^{h}\left((g^b)^{t_{\uid,k}y'_{u,k}}
 (g^a)^{\eta_{\uid,k}N_{i,k}}g^{\eta_{\uid,k}y'_{u,k}}\right)
 \prod_{k=h+1}^{d}g^{\eta_{\uid,k}y'_{u,k}}.
\end{equation}
To verify this expression, expand the exponent of the desired key as
\begin{equation}\label{eq:keycancel}
\begin{aligned}
 \alpha_u+\sum_{k=2}^d y_{u,k}\log_g H(\uid,k)
 =&\alpha'_u+ab\left(N_{i,1}+\sum_{k=2}^hN_{i,k}t_{\uid,k}\right)\\
 &+\sum_{k=2}^h\left(bt_{\uid,k}y'_{u,k}
       +a\eta_{\uid,k}N_{i,k}+\eta_{\uid,k}y'_{u,k}\right)
       \\
 &+\sum_{k=h+1}^d\eta_{\uid,k}y'_{u,k}.
\end{aligned}
\end{equation}
The coefficient of $ab$ is $N_i\boldsymbol t_{\uid}=0$, since $t_{\uid,1}=1$ and the trailing coordinates vanish. All remaining terms are computable as in~\eqref{eq:simkey}. The logarithms in~\eqref{eq:keycancel} are only notation for checking the identity; the reduction never evaluates them.

\paragraph{An initial challenge payload.}
Choose the challenge bit $\beta$ uniformly. Implicitly set $\boldsymbol v^0=-abc\boldsymbol e_1$ and
\begin{equation}\label{eq:basew}
 \boldsymbol w^{(k),0}=\begin{cases}-ac\boldsymbol e_k,&2\le k\le h,\\\boldsymbol0,&h<k\le d.\end{cases}
\end{equation}
All these masking vectors lie in $Z$. For a corrupt row $i\in J$, choose $r_i^0\sample\F$ and compute
\[
 C_{1,i}^0=A_{\rho(i)}^{r_i^0},\qquad
 C_{2,i}^0=g^{r_i^0},\qquad
 C_{3,i,k}^0=Y_{\rho(i),k}^{r_i^0}.
\]
This is the correct payload for the implicit vectors because $N_{i,1}=0$ and $N_{i,k}=0$ for $k\le h$. No corrupt exponent is needed.

For every honest challenge row $i$, implicitly set $r_i^0=c$ and output $C_{1,i}^0=e(g^c,g)^{\alpha'_{\rho(i)}}$, $C_{2,i}^0=g^c$, and $C_{3,i,k}^0=(g^c)^{y'_{\rho(i),k}}$. Indeed,
\[
 N_i\boldsymbol v^0+\alpha_{\rho(i)}c=\alpha'_{\rho(i)}c,\qquad
 N_i\boldsymbol w^{(k),0}+y_{\rho(i),k}c=y'_{\rho(i),k}c.
\]
For $k\le h$ the second equality cancels $-acN_{i,k}$ against $acN_{i,k}$; for $k>h$ neither term is embedded. Finally set $C_0^0=m_\beta T^{-1}$. The inverse is required by the chosen sign $v_1^0=-abc$.

\paragraph{Full re-randomization.}
Choose all offsets independently: $\delta\boldsymbol v\sample\F^d$, $\delta r_i\sample\F$, and $\delta\boldsymbol w^{(k)}\sample Z$. Compute
\begin{equation}\label{eq:rerand}
\begin{aligned}
 C_0&=C_0^0\E^{\delta v_1},\ C_{2,i}=C_{2,i}^0g^{\delta r_i},\\
 C_{1,i}&=C_{1,i}^0\E^{N_i\delta\boldsymbol v}A_{\rho(i)}^{\delta r_i},
 C_{3,i,k}=C_{3,i,k}^0Y_{\rho(i),k}^{\delta r_i}g^{N_i\delta\boldsymbol w^{(k)}}.
\end{aligned}
\end{equation}
This operation uses only known scalars and public group elements. It corresponds to adding the offsets to each initial random variable. Conditional on the DBDH exponents, the commitments, the public keys, the hash table, and the returned keys, the new $\boldsymbol v$, all $\boldsymbol w^{(k)}$, and all $r_i$ are jointly independent and uniform in their required spaces. In particular, the initially shared value $r_i^0=c$ among honest rows causes no residual correlation after independent $\delta r_i$ are added.

\paragraph{Real and random cases.}
If $T=\E^{abc}$, the payload in~\eqref{eq:rerand} is exactly a fresh repaired encryption under $N$. Lemma~\ref{lem:invariant} shows that this is also exactly the payload distribution of an encryption under the output header $M$. The public keys, random oracle, and secret keys have already been shown to have their joint real distribution.

If $T$ is uniform in $\GT$, every public-key, hash, key, and challenge component other than $C_0$ is independent of $T$ and $\beta$. The value $C_0=m_\beta T^{-1}\E^{\delta v_1}$ is a uniform one-time pad even conditional on all of those components. The adversary guesses $\beta$ with probability $1/2$. The reduction outputs one precisely when $\beta'=\beta$. The absolute difference between its output probabilities in the two DBDH cases is therefore $\epsilon$. There are no aborts or guessed identities.\qed
\end{proof}

\paragraph{Extension to adaptive random-oracle queries.}
The same reduction establishes security in the extension
of Section~\ref{sec:static}, with the same advantage.
All key requests are committed before any oracle replies,
so the simulator can fix $\boldsymbol t_{\uid}$ for every
identity with a key request before answering hashes.
For an identity without a committed key request, it uses
$\boldsymbol t_{\uid}=\boldsymbol e_1$ as above.

The simulator implements \eqref{eq:hash} by lazy sampling.
On the first use of an input $(\uid,k)$, either during
key generation or in an adversarial oracle query, it
samples $\eta_{\uid,k}$, stores the resulting hash value,
and returns the same value on every subsequent use.
Each fresh answer is uniform conditional on the preceding
transcript, since its exponent contains an independent
uniform additive mask.
The real-case simulation therefore remains exact.
In the random case, the transcript through the challenge
is independent of $\beta$.
Conditional on any fixed preceding transcript, each
subsequent oracle reply has the same distribution for
both values of $\beta$.
Induction over these queries shows that the complete
transcript remains independent of $\beta$.
Thus the distinguishing advantage remains $\epsilon$.

\paragraph{Security of the CBDH-based variant.}
We use the CBDH hardcore-bit convention of
Remark~4.1 of~\cite{DKW23}.
Under this convention, the repaired bit-message variant
is statically IND-CPA secure under CBDH, with the same
policy restrictions and adaptive identity-hash queries
as above.

Given a CBDH instance $(g,g^a,g^b,g^c)$, run the preceding
simulation with the formal substitution $a\mapsto-a$,
implemented using $(g^a)^{-1}$.
All public-key, hash, key, and nonmessage ciphertext
components remain computable without the CBDH target.
The initial sharing vector is now
$\boldsymbol v^0=abc\boldsymbol e_1$.
In the re-randomization step, sample
$\delta\boldsymbol v\sample Z$ rather than $\F^d$,
and sample all other offsets as before.
Consequently, the final secret is $z=abc$.

Conditional on $a,b\ne0$, the uniform exponent $c$
makes $z$ uniform in $\F$.
The independent offsets give the public keys and hashes
their real distributions independently of $a,b,c$,
and make the remaining sharing coordinates, zero-sharing
vectors, and row coins jointly uniform conditional on $z$.
Lemma~\ref{lem:invariant} therefore transfers the simulated
payload to the original header $M$.
The excluded event has probability at most $2/q$.

For distinct challenge bits $m_0,m_1$, sample
$u\sample\{0,1\}$ and set $C_0=u$.
Implicitly define $\beta$ by
$m_\beta=u\oplus\mathsf{hc}(\E^{abc})$.
This is a uniform challenge bit, and the resulting view
has the real challenge distribution.
If the adversary returns $\beta'$, output
$u\oplus m_{\beta'}$ as the prediction of
$\mathsf{hc}(\E^{abc})$.
This prediction is correct exactly when $\beta'=\beta$.
Identical challenge messages contribute zero distinguishing
advantage and can be handled by a random prediction.
Thus a nonnegligible distinguishing advantage yields a
nonnegligible hardcore-bit prediction advantage, with
an additive statistical loss of at most $2/q$.
The same lazy-sampling argument handles subsequent
identity-hash queries.

\paragraph{Boundary cases.}
If $\rank(M_J)=0$, then $h=d$ and the trailing-coordinate
constraints are vacuous.
If $h=1$, orthogonality forces
$\boldsymbol t_{\uid}=\boldsymbol e_1$.
Every queried honest row in the challenge therefore has
$N_{i,1}=0$, so the $ab$ term still vanishes;
products indexed by $2\le k\le h$ are empty products.
For $d=2$, the single masking vector has the form
$\boldsymbol w^{(2)}=(0,w_2)^\top$, and hence
$M_i\boldsymbol w^{(2)}=M_{i,2}w_2$.
Thus the repaired and original masking distributions
coincide at width two.
This is consistent with Theorem~\ref{thm:attack},
whose explicit attack requires $d\ge3$.

\section{Efficiency}\label{sec:efficiency}
Table~\ref{tab:cost} compares the original and repaired
DBDH-based constructions, counting group elements
excluding the policy description. Here $a\GT+b\G$ counts $a$ target- and $b$ source-group elements. Both schemes use the same width $d$ and $\ell$ rows. For $t$ rows participating in decryption, a direct row-by-row evaluation uses $td$ pairings: $d-1$ numerator pairings and one denominator pairing per row. Both allow the same aggregation of numerator pairings for a common identity. Thus the repair introduces no additional group elements or pairing calls.

\begin{table}[t]
\caption{Exact size and randomness counts; arithmetic costs are naive dense bounds.}\label{tab:cost}
\centering
\small
\begin{tabular}{@{}lll@{}}
\toprule
Quantity & Original & Repaired\\
\midrule
Authority public key & $1\GT+(d-1)\G$ & same\\
Authority secret key & $d$ field elements & same\\
User key per authority & $1\G$ & same\\
Ciphertext & $(\ell+1)\GT+\ell d\G$ & same\\
Encryption random scalars & $\ell+2d-1$ & $\ell+d+(d-1)^2$\\
Mask exponent arithmetic & $O(\ell d)$ & $O(\ell d^2)$\\
Direct decryption pairings & $td$ & $td$\\
\bottomrule
\end{tabular}
\end{table}

The extra randomness is $(d-1)(d-2)$ field elements. The mask computations form the product of the last $d-1$ columns of $M$ with a dense $(d-1)\times(d-1)$ random matrix. One can compute each scalar inner product first and then perform a single exponentiation by $g$; the repair does not require one group exponentiation per term in the inner product. Sparse policies can reduce the field-arithmetic cost. 

Existing authority public keys can be used for encryption
under the repaired scheme, and existing user keys can
decrypt the resulting ciphertexts.

\section{Conclusion}
We give a plaintext-recovery attack against both the
DBDH-based construction and its CBDH-based variant,
and identify the invalid extension of LSSS share
invariance to coordinate-wise ciphertext masks.
Independent full sharings of zero repair this masking distribution while preserving keys, decryption, and ciphertext size. A direct static DBDH reduction and its hardcore-bit
extension establish security of both repaired variants
in the random oracle model. Analogous attacks affect both MA-ABIPFE constructions of Datta and Pal considered in the appendix. The zero-sharing modification extends to both, with static security proofs under asymmetric DBDH for the small-universe variant and augmented $L$-DBDH for the large-universe variant. The latter assumption supplies a source-group encoding missing from the original reduction and admits the generic bilinear group analysis given in the appendix.

\bibliographystyle{splncs04}
\bibliography{references}
\appendix
\section{Extension to the MA-ABIPFE Constructions of Datta and Pal}\label{app:pkc}
This appendix analyzes the small- and large-universe constructions in Sections 5 and 6 of the 20 April 2023 full version of Datta and Pal~\cite{DP23Full}. We give attacks and zero-sharing modifications for both. We prove static security of the small-universe modification under asymmetric DBDH and of the large-universe modification under augmented $L$-DBDH. Appendix~\ref{app:pkc-large} isolates the original simulation gap; Appendices~\ref{app:pkc-augmented} and~\ref{app:pkc-large-proof} supply the augmented assumption and complete reduction. The source's single-authority construction is outside our scope.

\subsection{Notation, Construction, and Game}\label{app:pkc-notation}
A ciphertext encrypts a message vector $\boldsymbol v$ under an access policy and a coordinate list $\mathcal I$. Attribute keys associated with the same identity, function vector $\boldsymbol u$, and matching coordinate list allow an authorized user to recover the inner product $\boldsymbol v\cdot\boldsymbol u$. An authorized query with the challenge coordinate list
is admissible only if its inner products with the two
challenge messages are equal. In this appendix, $\boldsymbol v$ denotes the message vector, while $B$ contains the sharing randomness.
Use prime-order groups $\G_1,\G_2,\GT$ of order $q$, generators $g_1,g_2$, and an asymmetric pairing $e:\G_1\times\G_2\to\GT$; set $g_T=e(g_1,g_2)$. The sharing width is $d=s_{\max}$. Message and function vectors $\boldsymbol v,\boldsymbol u\in\F^n$ have a coordinate list $\mathcal I=(\iota_1,\ldots,\iota_n)$ of distinct increasing positive integers; abbreviate $v_{\iota_k},u_{\iota_k}$ as $v_k,u_k$. Each query uses its own list length $n$. Indices range over $i\in[\ell]$, $2\le j\le d$, and $k\in[n]$ unless stated otherwise. For independent random oracles $H_1,H_2,H_3$ into $\G_2$, write $h_{1,t,k}=H_1(t\|\iota_k\|\mathcal I)$, $h_{2,j,k}=H_2(j\|\iota_k\|\mathcal I)$, and $h_{3,\uid,\boldsymbol u,j,k}=H_3(\uid\|\boldsymbol u\|j\|\iota_k)$, suppressing the current list in the first two abbreviations. Concatenations use unambiguous encodings.

\paragraph{Original small-universe construction.} In the small-universe scheme, authority $t$ publishes $(g_1^{\alpha_t},\{g_1^{y_{t,j}}\}_{j=2}^d)$, with independent uniform secret exponents $\alpha_t,y_{t,j}\in\F$ for $2\le j\le d$. Each authority controls the single attribute identified by $t$. It issues the function key
\begin{equation}\label{eq:fe-key}
 K_{\uid,t,\boldsymbol u}=\prod_{k=1}^n h_{1,t,k}^{\alpha_tu_k}
 \prod_{j=2}^d\prod_{k=1}^n(h_{2,j,k}h_{3,\uid,\boldsymbol u,j,k})^{y_{t,j}u_k}.
\end{equation}
For an injectively labeled policy $(M,\rho)$, encryption
samples $B\sample\F^{d\times n}$ and
$r_i,x_{j,k},f_j\sample\F$, all independently.
Let $\boldsymbol z^\top$ be the first row of $B$.
We write the ciphertext vectors coordinatewise:
$\boldsymbol C_0=(C_{0,k})_{k=1}^n$,
$\boldsymbol C_{1,i}=(C_{1,i,k})_{k=1}^n$, and
$\boldsymbol C_{3,i,j}=(C_{3,i,j,k})_{k=1}^n$
belong to $\GT^n$.
The components $C_{2,i},C_{4,i,j}$ belong to $\G_1$,
and $K_{\uid,t,\boldsymbol u}\in\G_2$.
The payload is
\begin{equation}\label{eq:fe-original}
\begin{aligned}
 C_{0,k}&=g_T^{v_k+z_k},\ C_{2,i}=g_1^{r_i},
 C_{1,i,k}=g_T^{(M_iB)_k}e((g_1^{\alpha_{\rho(i)}})^{r_i},h_{1,\rho(i),k}),\\
 C_{3,i,j,k}&=e(g_1^{M_{i,j}x_{j,k}+y_{\rho(i),j}r_i},h_{2,j,k}),
 C_{4,i,j}=g_1^{M_{i,j}f_j+y_{\rho(i),j}r_i}.
\end{aligned}
\end{equation}
The row-decryption value for a key of identity $\uid$ and function $\boldsymbol u$ is
\begin{equation}\label{eq:fe-row}
 R_i(\uid,\boldsymbol u)=
 \frac{\displaystyle\prod_{k=1}^n C_{1,i,k}^{u_k}
 \prod_{j=2}^d\prod_{k=1}^n
 C_{3,i,j,k}^{u_k}e(C_{4,i,j},h_{3,\uid,\boldsymbol u,j,k})^{u_k}}
 {e(C_{2,i},K_{\uid,\rho(i),\boldsymbol u})}.
\end{equation}
The ciphertext contains the header $((M,\rho),\mathcal I)$ and the displayed payload. Each key is associated with its identity, attribute, function vector, and coordinate list. Decryption requires a common identity and function vector, matching coordinate lists, and an authorized attribute set; otherwise it returns $\bot$. For reconstruction coefficients $\omega_i$ satisfying $\sum_i\omega_iM_i=\boldsymbol e_1^\top$, the product $\prod_iR_i(\uid,\boldsymbol u)^{\omega_i}$ equals $g_T^{\boldsymbol z\cdot\boldsymbol u}$. Dividing $\prod_k C_{0,k}^{u_k}$ by this product yields $g_T^{\boldsymbol v\cdot\boldsymbol u}$. Efficiently recovering the numerical inner product requires an output range with efficiently computable discrete logarithms. Our attack uses only values zero and one.

\paragraph{Original large-universe construction.}
The large-universe variant has an attribute universe $\mathcal X$, an ownership map $\mathsf T:\mathcal X\to\mathcal U$, and row labels $\delta:[\ell]\to\mathcal X$, with $\rho=\mathsf T\circ\delta$; labels need not be injective. Authority setup is unchanged, and each authority shares its exponents across its attributes. An additional independent random oracle $R$ maps tuples $(t,j,\mathcal I)$, with $t\in\mathcal X$ and $j\in[d]$, to $\G_2$.

For attribute $t$, let $K_{\uid,t,\boldsymbol u}$ denote the base expression in~\eqref{eq:fe-key}, using $\alpha_{\mathsf T(t)},y_{\mathsf T(t),j}$ as the authority exponents and retaining $t$ as the $H_1$ input. Each key-generation call independently samples $\tau_j\sample\F$ for $j\in[d]$ and returns
\[
K^{\mathrm{LU}}_{\uid,t,\boldsymbol u}
=K_{\uid,t,\boldsymbol u}\prod_{j=1}^d
R(t\|j\|\mathcal I)^{\tau_j},
\qquad
Z_t^{(j)}=g_1^{\tau_j}.
\]
The dependence on $\mathcal I$ and on the particular key-generation call is suppressed.

Encryption uses the same coins and coordinatewise notation
as the small-universe scheme.
The components $C_{0,k},C_{2,i},C_{3,i,j,k},C_{4,i,j}$
remain as in~\eqref{eq:fe-original}, while
\[
C_{1,i,k}
=g_T^{(M_iB)_k}
e((g_1^{\alpha_{\rho(i)}})^{r_i},h_{1,\delta(i),k}),
\
C_{5,i,j}=R(\delta(i)\|j\|\mathcal I)^{r_i}\in\G_2
\quad(j\in[d]).
\]
The ciphertext header is $((M,\delta),\mathcal I)$.
For row decryption, use~\eqref{eq:fe-row} with denominator
$e(C_{2,i},K^{\mathrm{LU}}_{\uid,\delta(i),\boldsymbol u})$
and multiply its numerator by
$\prod_{j=1}^d e(Z_{\delta(i)}^{(j)},C_{5,i,j})$.
The auxiliary factors cancel, so reconstruction and the remaining decryption conditions are unchanged.

Here $H_1$ uses attribute $\delta(i)$, whereas authority exponents use $\rho(i)$. This follows the source's correctness calculation; its encryption listing inconsistently uses $\rho(i)$ as the $H_1$ input. The attack below chooses matching attribute and authority identifiers and does not rely on this discrepancy.

\paragraph{Static MA-ABIPFE admissibility.}
Both variants use the static security framework of
Definition~4.1 and Remark~4.1 of~\cite{DP23Full},
with the small-universe restrictions specified
in Remark~4.2.
After global setup, all corrupt public keys, honest authorities, hash inputs, key tuples $(\uid,S,\boldsymbol u,\mathcal I)$, where $S$ is the requested attribute set, and challenge vectors $(\boldsymbol v_0,\boldsymbol v_1,\mathcal I^\star)$ and policy are committed before replies. Here $\mathcal I^\star$ is the common coordinate list of the two challenge messages $\boldsymbol v_0,\boldsymbol v_1\in\F^{|\mathcal I^\star|}$. The challenger encrypts $\boldsymbol v_\beta$ for a uniform bit $\beta$, and the adversary guesses that bit; advantage is measured as in Section~\ref{sec:static}. The key tuples have distinct identities as stipulated in~\cite{DP23Full}. For each tuple with $\mathcal I=\mathcal I^\star$, either its attributes together with all corrupt attributes are unauthorized, or $(\boldsymbol v_0-\boldsymbol v_1)\cdot\boldsymbol u=0$. As a nontriviality condition, the attributes controlled by corrupt authorities alone must not satisfy the challenge policy. Each requested attribute set $S$ contains only attributes
controlled by honest authorities, i.e.,
$\mathsf T(S)\cap\mathcal C=\varnothing$;
in the small-universe setting this reduces to
$S\cap\mathcal C=\varnothing$.
The adversary generates corrupt keys itself and can retain their exponents. The attacks below meet these restrictions without using malformed keys.

\subsection{An Admissible Attack on Both Variants}\label{app:pkc-attack}
Take $d=3$, $\mathcal I=\mathcal I^\star=(1)$, function vector $\boldsymbol u=(1)$, and challenge vectors $\boldsymbol v_0=(0)$, $\boldsymbol v_1=(1)$. Use the matrix in~\eqref{eq:attackmatrix}, corrupt $A$, and request only the $B$ key for $\uid_B$ and the $C$ key for a different identity $\uid_C$. Request all hashes used below. The sets $\{A,B\}$ and $\{A,C\}$ are unauthorized, so the queries are legal even though the two challenge inner products differ. Every choice is made before receiving any response. In the large-universe variant choose three attributes, each controlled by its correspondingly named authority.

Since there is a single message coordinate, write
$\boldsymbol b=(z,b_2,b_3)^\top$ for $B$. The corrupt first row exposes the mask encodings $F_j\in\G_1$ and $X_j\in\GT$, for $j=2,3$,
\begin{equation}\label{eq:fe-extract}
 F_j=\frac{C_{4,1,j}}{C_{2,1}^{y_{A,j}}}=g_1^{f_j},\qquad
 X_j=\frac{C_{3,1,j,1}}{e(C_{2,1},h_{2,j,1})^{y_{A,j}}}
     =e(g_1,h_{2,j,1})^{x_{j,1}}.
\end{equation}
Expanding~\eqref{eq:fe-row} for either queried key gives
\[
 R_i(\uid,(1))=g_T^{M_i\boldsymbol b}
 \prod_{j=2}^3 X_j^{M_{i,j}}e(F_j,h_{3,\uid,(1),j,1})^{M_{i,j}}.
\]
Consequently the adversary computes
\begin{equation}\label{eq:fe-attack}
\begin{aligned}
 S_1&=C_{1,1,1}/e(C_{2,1},h_{1,A,1})^{\alpha_A}=g_T^{z+b_2+b_3},\\
 S_2&=R_2(\uid_B,(1))/(X_2e(F_2,h_{3,\uid_B,(1),2,1}))=g_T^{b_2},\\
 S_3&=R_3(\uid_C,(1))/(X_3e(F_3,h_{3,\uid_C,(1),3,1}))=g_T^{b_3}.
\end{aligned}
\end{equation}
It recovers $C_{0,1}S_2S_3/S_1=g_T^{v_\beta}$. Testing whether this equals the identity or $g_T$ determines $\beta$ exactly. No general discrete logarithm is computed.

In the large-universe case, use its augmented row formula
with $t=\delta(i)$ and the auxiliary exponents $\tau_j$
of the queried key.
The extra numerator factors equal
$\prod_{j=1}^d e(g_1,R(t\|j\|\mathcal I))^{\tau_jr_i}$
and cancel the extra factors in the key denominator.
The residual expression and every step
of~\eqref{eq:fe-extract}--\eqref{eq:fe-attack} are unchanged.

\begin{proposition}\label{prop:fe-attack}
Both multi-authority constructions in Sections 5 and 6 of~\cite{DP23Full}, with their stated matrix-level interfaces, admit a polynomial-time static adversary with success probability one and distinguishing advantage $1/2$. The attack uses one corrupt authority and two legal key-query identities.
\end{proposition}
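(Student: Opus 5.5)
The argument follows the pattern of Theorem~\ref{thm:attack}, applied to the attack of Appendix~\ref{app:pkc-attack}. It has three parts: admissibility, exact correctness of the extraction calculation for every choice of coins, and a polynomial running-time bound.

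\emph{Admissibility.} We check the conditions of the static MA-ABIPFE game for both variants. The challenge policy is the matrix in~\eqref{eq:attackmatrix} with an injective labeling. In the large-universe case, $\delta$ maps the three rows to three attributes owned by $A,B,C$ respectively, so $\rho=\mathsf T\circ\delta$ is injective as well. The corrupt row alone is unauthorized, and the two queried identities are distinct. Each requested attribute set is a single honest attribute, so $\mathsf T(S)\cap\mathcal C=\varnothing$. Both queries use $\mathcal I=\mathcal I^\star=(1)$ and $(\boldsymbol v_0-\boldsymbol v_1)\cdot\boldsymbol u=-1\neq0$, so each tuple must satisfy the alternative condition. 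It does: the effective sets $\{A,B\}$ and $\{A,C\}$ are unauthorized, because no proper subset of the rows of $M$ spans $\boldsymbol e_1^\top$, as noted in Section~\ref{sec:cryptanalysis}. All hash inputs, namely $h_{1,A,1}$, $h_{2,j,1}$, the $h_{3,\cdot}$ values for both identities, and in the large-universe case the $R$ values, are committed in advance. Corrupt keys are generated honestly, so the adversary knows $\alpha_A,y_{A,2},y_{A,3}$.

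\emph{Exact calculation.} Substitute~\eqref{eq:fe-original} into~\eqref{eq:fe-extract}. Bilinearity gives $F_j=g_1^{f_j}$ and $X_j=e(g_1,h_{2,j,1})^{x_{j,1}}$, because $M_{1,j}=1$ for $j=2,3$. Next, expand~\eqref{eq:fe-row} against the key~\eqref{eq:fe-key} for $n=1$, $u_1=1$. The denominator $e(g_1^{r_i},K)$ cancels the $\alpha$ terms and $y$ terms carried by $C_{1,i,1}$, $C_{3,i,j,1}$ and $C_{4,i,j}$. What remains is the displayed residual $R_i=g_T^{M_i\boldsymbol b}\prod_j X_j^{M_{i,j}}e(F_j,h_{3,\uid,(1),j,1})^{M_{i,j}}$. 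Row~2 has $M_{2,j}$ nonzero only at $j=2$, and row~3 only at $j=3$. Dividing as in~\eqref{eq:fe-attack} therefore gives $g_T^{b_2}$ and $g_T^{b_3}$, and $S_1=g_T^{z+b_2+b_3}$ follows directly from the known $\alpha_A$. Hence $C_{0,1}S_2S_3/S_1=g_T^{v_\beta}$, which equals $1$ if $\beta=0$ and $g_T$ if $\beta=1$. Since $g_T\ne1$, the test identifies $\beta$ with probability one, giving advantage $1/2$. Every identity used holds for all values of the coins, including zero exponents.

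\emph{Large-universe case.} This is the only step needing more than routine algebra. The modified denominator contributes the extra factor $e(g_1^{r_i},\prod_j R(\delta(i)\|j\|\mathcal I)^{\tau_j})$. The added numerator factors $\prod_j e(Z^{(j)}_{\delta(i)},C_{5,i,j})=\prod_j e(g_1,R(\delta(i)\|j\|\mathcal I))^{\tau_j r_i}$ cancel it exactly. We must also confirm that $C_{1,i,1}$ uses $h_{1,\delta(i),1}$, matching the $H_1$ input $t=\delta(i)$ in the key. This consistency holds because we pick attribute identifiers coinciding with authority names, so the source's $\rho$/$\delta$ discrepancy is irrelevant. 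After this cancellation the residual formula and the steps~\eqref{eq:fe-extract}--\eqref{eq:fe-attack} are unchanged, with $S_1$ computed using $h_{1,\delta(1),1}$.

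\emph{Running time.} The attack uses a constant number of exponentiations, pairings and group operations, together with one equality test in $\GT$. It is therefore polynomial time, and it uses one corrupt authority and two legal key-query identities.
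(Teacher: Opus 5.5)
Your proposal is correct and follows the same route as the paper's proof. Both arguments rest on three things: admissibility from the unauthorized effective sets $\{A,B\}$ and $\{A,C\}$ together with $(\boldsymbol v_0-\boldsymbol v_1)\cdot\boldsymbol u\neq0$; the identities \eqref{eq:fe-extract}--\eqref{eq:fe-attack}, which hold for all coins; and the explicit cancellation of the auxiliary $R$ and $Z^{(j)}$ factors in the large-universe case. Your expansion simply spells out what the paper calls ``the displayed identities.''

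The one point you omit, which the paper includes, is widths $d>3$. The paper handles these by padding $M$ with zero columns and committing to the extra $H_2,H_3$ inputs. Since $M_{i,j}=0$ for $j>3$, every residual and the final test are unchanged.
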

\begin{proof}
Admissibility follows from the two unauthorized attribute sets. The displayed identities hold for all encryption coins and recover the binary challenge. Extra width is handled by padding zero columns. The large-universe claim follows from the explicit cancellation above, using one attribute per authority as a permitted special case.\qed
\end{proof}

\subsection{Full Zero-Sharing Modification}\label{app:pkc-repair}
We modify only the masking terms in $C_{3,i,j,k}$
and $C_{4,i,j}$ of the original constructions recalled
in Appendix~\ref{app:pkc-notation}.
Independently sample $\boldsymbol w^{(j,k)}\sample Z$ for every $2\le j\le d$, $1\le k\le n$, and $\boldsymbol f^{(j)}\sample Z$ for every $2\le j\le d$. Set
\begin{equation}\label{eq:fe-repair}
\begin{aligned}
 C_{3,i,j,k}&=e(g_1^{M_i\boldsymbol w^{(j,k)}+y_{\rho(i),j}r_i},h_{2,j,k}),\quad
 C_{4,i,j}=g_1^{M_i\boldsymbol f^{(j)}+y_{\rho(i),j}r_i}.
\end{aligned}
\end{equation}
Each vector is shared across rows for its fixed superscript, not resampled separately for each row. All other algorithms and components, including $C_5$ in the large-universe variant, stay unchanged subject to the explicit attribute-index convention above.

\begin{proposition}[Correctness and payload invariance]\label{prop:fe-invariance}
Both modified constructions are correct for authorized keys of a common identity, function vector, and matching index set. For $N=MP$ with $P$ invertible and $\boldsymbol e_1^\top P=\boldsymbol e_1^\top$, their payload laws are invariant under substitution of $M$ by $N$, for fixed public keys, oracle functions, message vector,
row labels, coordinate list, and first row of $B$, while retaining the original policy header.
\end{proposition}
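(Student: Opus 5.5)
The proof has two parts, correctness and invariance, each following the template of Theorem~\ref{thm:correct} and Lemma~\ref{lem:invariant}.

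\emph{Correctness.} I would substitute the modified components \eqref{eq:fe-repair} together with the unchanged key \eqref{eq:fe-key} into the row formula \eqref{eq:fe-row}. The denominator $e(g_1^{r_i},K_{\uid,\rho(i),\boldsymbol u})$ cancels three kinds of terms from the numerator. The $\alpha$-terms $e(g_1,h_{1,\rho(i),k})^{\alpha r_iu_k}$ cancel against $C_{1,i,k}$. The $y_{\rho(i),j}r_i$ parts inside $C_{3}$ cancel against the $h_{2,j,k}$ key factors. The $y_{\rho(i),j}r_i$ part of $C_{4}$, paired with $h_{3,\uid,\boldsymbol u,j,k}$, cancels against the $h_3$ key factors. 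This gives
\[
R_i(\uid,\boldsymbol u)=g_T^{M_iB\boldsymbol u}\prod_{j=2}^d\prod_{k=1}^n e(g_1,h_{2,j,k})^{u_kM_i\boldsymbol w^{(j,k)}}\,e(g_1,h_{3,\uid,\boldsymbol u,j,k})^{u_kM_i\boldsymbol f^{(j)}}.
\]
Every remaining exponent is linear in $M_i$. Raising to $\omega_i$ and multiplying therefore replaces $M_i$ by $\boldsymbol e_1^\top$. Since $w^{(j,k)}_1=f^{(j)}_1=0$, all mask factors vanish, and the product equals $g_T^{\boldsymbol z\cdot\boldsymbol u}$, exactly as in the original. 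The identity-dependent $h_3$ factors carry a common identity and function vector across rows, which is why the common-identity hypothesis is needed.

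For the large-universe variant, I would additionally verify the cancellation already computed in Appendix~\ref{app:pkc-attack}. There the extra numerator factors $\prod_j e(Z^{(j)}_{\delta(i)},C_{5,i,j})$ equal $\prod_j e(g_1,R(\delta(i)\|j\|\mathcal I))^{\tau_jr_i}$ and match the extra key factor in the denominator. The $H_1$ input is $\delta(i)$ on both sides, by the stated convention. The residual row expression is then identical, and the same reconstruction applies.

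\emph{Invariance.} I would couple the coins exactly as in Lemma~\ref{lem:invariant}: $B_M=PB_N$ columnwise, $\boldsymbol w^{(j,k)}_M=P\boldsymbol w^{(j,k)}_N$, $\boldsymbol f^{(j)}_M=P\boldsymbol f^{(j)}_N$, and identical $r_i$. Because $\boldsymbol e_1^\top P=\boldsymbol e_1^\top$, the first row $\boldsymbol z^\top$ of $B$ is preserved. The map $B\mapsto PB$ is therefore a bijection on matrices with a fixed first row, preserving their uniform law, and $P$ is a bijection on $Z$. All vectors are independent, so the joint product distribution is preserved. Then $MB_M=NB_N$, $M\boldsymbol w_M=N\boldsymbol w_N$, and $M\boldsymbol f_M=N\boldsymbol f_N$ hold simultaneously. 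Hence every payload component is equal under the coupling: $C_0$, $C_1$, $C_2$, $C_3$, $C_4$, and also $C_5$, which depends only on $r_i$, labels, and oracles. This holds with public keys and oracle functions fixed, with no discrete logarithms needed, and with the header kept as $(M,\rho)$ or $(M,\delta)$.

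\emph{Main obstacle.} The main obstacle is bookkeeping rather than conceptual: checking that no payload component depends on a single coordinate of a sharing vector. The original $C_3$ and $C_4$ did, via $M_{i,j}x_{j,k}$ and $M_{i,j}f_j$, and this is exactly what the repair removes. The other point to check is that $B$ now has $n$ columns, each sharing one coordinate $z_k$, so the coupling must act on all columns simultaneously with the same $P$.
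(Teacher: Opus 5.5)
Your proposal is correct and takes the same route as the paper. You derive the same residual row expression, kill the zero sharings via $\sum_i\omega_iM_i=\boldsymbol e_1^\top$ (with the $C_5$ auxiliary factors cancelling in the large-universe case), and use the identical coupling $B_M=PB_N$, $\boldsymbol w_M^{(j,k)}=P\boldsymbol w_N^{(j,k)}$, $\boldsymbol f_M^{(j)}=P\boldsymbol f_N^{(j)}$ with shared $r_i$, merely spelling out a few steps the paper leaves implicit, such as why a common identity and function vector are needed and why $B\mapsto PB$ preserves the uniform law for a fixed first row.
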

\begin{proof}
Row decryption in either variant leaves
\[
 g_T^{M_iB\boldsymbol u}
 \prod_{j=2}^d\prod_{k=1}^n
 e(g_1,h_{2,j,k})^{u_kM_i\boldsymbol w^{(j,k)}}
 e(g_1,h_{3,\uid,\boldsymbol u,j,k})^{u_kM_i\boldsymbol f^{(j)}}.
\]
Reconstruction applies $\sum_i\omega_iM_i=\boldsymbol e_1^\top$ to every exponent, killing every zero sharing and retaining $g_T^{\boldsymbol z\cdot\boldsymbol u}$. For invariance, couple $B_M=PB_N$, $\boldsymbol w_M^{(j,k)}=P\boldsymbol w_N^{(j,k)}$, and $\boldsymbol f_M^{(j)}=P\boldsymbol f_N^{(j)}$, while using identical $r_i$. These are bijections of their sampling spaces, preserve all independence and shared-vector relations, and make every payload component equal. The components $C_5$ do not depend on the sharing matrix. Equality under this distribution-preserving coupling establishes the claim.\qed
\end{proof}

The two families of masks use $(d-1)(n+1)$ random field scalars originally and $(d-1)^2(n+1)$ after modification. The other coins are unchanged. There are no additional ciphertext group elements or decryption pairing calls. Naive mask arithmetic increases from $O(\ell(d-1)(n+1))$ to $O(\ell(d-1)^2(n+1))$ field operations. 

\subsection{Static Security of the Small-Universe Modification}\label{app:pkc-proof}
For independent uniform $a,b,c\in\F$, the asymmetric DBDH input is $(g_1^a,g_1^c,g_2^a,g_2^b,T)$ together with the group description, with $T$ either $g_T^{abc}$ or independent uniform. This is the input used in Assumption 3.1 of~\cite{DP23Full}.

\begin{theorem}\label{thm:fe-small}
Under this asymmetric DBDH assumption, the small-universe modification is statically IND-CPA secure in the random oracle model specified above, for polynomially bounded sharing width and injective authority labels. An adversary with advantage $\epsilon$ gives a DBDH distinguisher with difference-of-probabilities advantage $\epsilon$.
\end{theorem}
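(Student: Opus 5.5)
The plan is to adapt the reduction from Section~\ref{sec:security} to the asymmetric setting and the extra MA-ABIPFE structure. Given $(g_1^a,g_1^c,g_2^a,g_2^b,T)$, the reduction receives the static commitments and applies Lemma~\ref{lem:normalize} to the corrupt rows $J$ to obtain $N=MP$ and $h=d-\rank(M_J)$. By Proposition~\ref{prop:fe-invariance}, it may compute the payload with $N$ while keeping the header $((M,\rho),\mathcal I^\star)$.

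\textbf{Step 1: honest public keys.} For each honest challenge authority $t=\rho(i)$, implicitly set
\[
\alpha_t=\alpha'_t+aN_{i,1},\qquad y_{t,j}=y'_{t,j}+aN_{i,j}\ \text{for } 2\le j\le h,
\]
and $y_{t,j}=y'_{t,j}$ otherwise. These keys are published in $\G_1$ using $g_1^a$. Here the factor $b$ is not embedded in the authority exponents. It instead sits in the $\G_2$ hash outputs, since keys live in $\G_2$ and the instance supplies $g_2^a,g_2^b$ but not $g_1^b$.

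\textbf{Step 2: program the hashes.} Only values on the challenge list $\mathcal I^\star$ are programmed. For $\mathcal I\neq\mathcal I^\star$ and non-challenge inputs, $H_1,H_2,H_3$ are answered with $g_2^{\eta}$ for fresh $\eta$. Keys for those inputs are then computable, because every term $a\cdot\eta$ is obtained from $(g_2^a)^\eta$.

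On $\mathcal I^\star$, set $h_{1,t,k}=g_2^{b\theta_k+\eta}$ and $h_{2,j,k}=g_2^{b\theta_k\cdot[j\le h]+\eta}$, where $\theta_k$ is chosen from $\boldsymbol v_0-\boldsymbol v_1$. Set $h_{3,\uid,\boldsymbol u,j,k}$ so that $\log(h_{2,j,k}h_{3,\uid,\boldsymbol u,j,k})$ has $b$-coefficient $\theta_k t_{\uid,\boldsymbol u,j}$. Here $\boldsymbol t$ is orthogonal to the rows $N_{I}$ for unauthorized queries, and is forced to vanish beyond $h$ as in the main proof.

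Within each key, the $ab$-coefficient is then $\sum_k u_k\theta_k\,N_i\boldsymbol t$. For unauthorized queries this vanishes because $N_i\boldsymbol t=0$. For authorized queries it vanishes because $(\boldsymbol v_0-\boldsymbol v_1)\cdot\boldsymbol u=0$, which is why $\theta$ is tied to $\boldsymbol v_0-\boldsymbol v_1$. For authorized queries, $\boldsymbol t$ is taken to be $\boldsymbol 0$ beyond the first coordinate, so $N_i\boldsymbol t=N_{i,1}$ and the $ab$ term is proportional to $\boldsymbol u\cdot\theta=0$.

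\textbf{Step 3: challenge payload.} Set $r_i=c$ on honest rows. Implicitly set the first row of $B$ to $-abc\,\theta^\top$ plus uniform offsets, $\boldsymbol w^{(j,k)}=-ac\,\theta_k\boldsymbol e_j$ for $j\le h$, and $\boldsymbol f^{(j)}$ analogously with the $h_3$-programming coefficient. With these choices every $abc$ and $ac$ term cancels, exactly as in~\eqref{eq:basew}. Corrupt rows use fresh $r_i$ with known public keys, since $N_{i,j}=0$ for $j\le h$. The message component is $C_{0,k}=g_T^{v_{\beta,k}}T^{-\theta_k}$ times re-randomization. Choosing $\theta=\boldsymbol v_0-\boldsymbol v_1$ makes $\beta$ information-theoretic when $T$ is random. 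This needs one further step: split off a uniform common shift so that the distribution does not depend on $\beta$.

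\textbf{Step 4: re-randomize and conclude.} Re-randomize all coins as in~\eqref{eq:rerand}, including the sharing vectors $\boldsymbol f^{(j)}$. When $T=g_T^{abc}$, the view is then a real encryption under $N$, and Proposition~\ref{prop:fe-invariance} transfers it to $M$. The reduction never aborts, which gives advantage $\epsilon$.

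\textbf{Main obstacle.} I expect the main obstacle to be the random case. Independence from $\beta$ requires $B$'s first row to be uniform apart from the embedded $-abc\,\theta$ direction, and key consistency must hold simultaneously for all committed $(\uid,\boldsymbol u)$ through the programming of $h_3$. I would first verify the $h_3$-programming identity coordinatewise in $j$.
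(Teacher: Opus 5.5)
\textbf{Gap: the initial zero-sharing vectors in Step 3.} Your overall architecture matches the paper's: the $a$-embedding of the authority exponents, $H_1$ programmed with $(g_2^b)^{\Delta_k}$, orthogonal vectors for unauthorized queries and $D=0$ for authorized ones, $r_i^0=c$, $B^0=-abc\,\boldsymbol e_1\boldsymbol\Delta^\top$, re-randomization, and Proposition~\ref{prop:fe-invariance}. However, Step~3 fails as written.

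For an honest row with $r_i^0=c$, the $\G_1$ exponent inside $C_{4,i,j}$ is $N_i\boldsymbol f^{(j),0}+y'_{t,j}c+acN_{i,j}$. The first argument of $C_{3,i,j,k}$ has the same form with $\boldsymbol w^{(j,k),0}$.
\begin{itemize}
\item $C_{4,i,j}$ lies in $\G_1$, and the instance contains no $g_1^{ac}$. So the $ac$-part of $N_i\boldsymbol f^{(j),0}$ must equal exactly $-acN_{i,j}$ for every honest row simultaneously, which $\boldsymbol f^{(j),0}=-ac\,\boldsymbol e_j$ achieves. Taking $\boldsymbol f^{(j)}$ ``with the $h_3$-programming coefficient'' is not even well defined: that coefficient depends on $\uid$ and $k$, whereas $\boldsymbol f^{(j)}$ is a single vector shared by all rows of the one challenge ciphertext. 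Any deviation leaves an uncomputable factor $g_1^{ac(\cdot)}$.
\item Your choice $\boldsymbol w^{(j,k),0}=-ac\,\theta_k\boldsymbol e_j$ leaves the residual $acN_{i,j}(1-\theta_k)$ in the first argument of $C_{3,i,j,k}$. Paired with your $h_{2,j,k}=g_2^{b\theta_k+\eta}$, it produces $g_T^{abc\,N_{i,j}\theta_k(1-\theta_k)}$. This is obtainable only from $T$, and using $T$ there would put target dependence outside $C_0$ and destroy the random-case argument. Since $\theta_k=\Delta_k$ is arbitrary, this is not a corner case.
\end{itemize}

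\textbf{How to fix it.} Take $\boldsymbol w^{(j,k),0}=\boldsymbol f^{(j),0}=-ac\,\boldsymbol e_j$ for $2\le j\le h$, and zero for $j>h$, independently of $\theta$, $k$, and the hash programming. The inner exponent then reduces to $y'_{t,j}c$, so $C_{4,i,j}^0=(g_1^c)^{y'_{t,j}}$ and $C_{3,i,j,k}^0=e((g_1^c)^{y'_{t,j}},h_{2,j,k})$ for any public $h_{2,j,k}$.

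After this correction your $b$-programming of $H_2$ is harmless but superfluous. The paper leaves $H_2$ with known exponents and places the entire $b$-dependence of the $y$-part in $H_3$ as $(g_2^b)^{\Delta_k\gamma_{\uid,j}}$, with $\gamma_{\uid,j}=0$ for authorized queries.

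\textbf{The random case.} The fact you need is not uniformity of $\widetilde{\boldsymbol z}$, which also enters $C_1$. Write $T=g_T^{\vartheta}$. Conditional on all other components, the exponent $\boldsymbol v_\beta+\widetilde{\boldsymbol z}-\vartheta\boldsymbol\Delta$ of $C_0$ is uniform on the same affine line for both bits, because $\boldsymbol v_0-\vartheta\boldsymbol\Delta=\boldsymbol v_1-(\vartheta-1)\boldsymbol\Delta$. Your ``split off a uniform common shift'' is correct only if it means exactly this shift of $\vartheta$.
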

\begin{proof}
Let $\boldsymbol\Delta=\boldsymbol v_0-\boldsymbol v_1\in\F^n$ be the challenge-message difference vector, where $n=|\mathcal I^\star|$. Identical challenge vectors give zero advantage and can be handled trivially. Assume they differ, and let $J$ be the challenge-row indices controlled by corrupt authorities. Normalize the corrupt rows using Lemma~\ref{lem:normalize}, obtaining $N=MP$ and $h=d-\rank(M_J)$. All payload calculations use $N$, but the output header remains $M$.

\paragraph{Authority keys.}
For honest $t=\rho(i)$ in the policy, let $\nu_{t,1}=N_{i,1}$, and let $\nu_{t,j}=N_{i,j}$ for $2\le j\le h$ and zero otherwise. Choose fresh uniform offsets $\widehat\alpha_t,\widehat y_{t,j}$ and implicitly set $\alpha_t=\widehat\alpha_t+a\nu_{t,1}$, $y_{t,j}=\widehat y_{t,j}+a\nu_{t,j}$. Publish the required $\G_1$ elements using $g_1^a$. Authorities absent from the policy are generated honestly. Corrupt public keys are adopted as supplied; their exponents need not be known.

\paragraph{Oracle programming.}
For every relevant attribute $t$ and
$\iota_k\in\mathcal I^\star$, set
$H_1(t\|\iota_k\|\mathcal I^\star)
=(g_2^b)^{\Delta_k}g_2^{\eta_{1,t,k}}$.
For every relevant coordinate list $\mathcal I$,
$\iota_k\in\mathcal I$, and $2\le j\le d$, set
$H_2(j\|\iota_k\|\mathcal I)
=g_2^{\eta_{2,j,k,\mathcal I}}$.
All offsets are sampled independently and uniformly
from $\F$.
All other $H_1$ inputs, including those with
$\mathcal I\ne\mathcal I^\star$, receive independent
known-exponent random elements. For an identity whose committed query has $\mathcal I=\mathcal I^\star$ and whose attributes together with the corrupt attributes are unauthorized, choose $\boldsymbol d_{\uid}$ with first coordinate one, zero trailing coordinates beyond $h$, and orthogonal to all the corresponding rows of $N$. Set $\gamma_{\uid,j}=d_{\uid,j}$ for $2\le j\le d$. For an authorized query set, put $\gamma_{\uid,j}=0$ for all $j$. In both cases sample independent $\eta_{3,\uid,\boldsymbol u,j,k}\sample\F$ and program, at the query's function vector,
\[
 H_3(\uid\|\boldsymbol u\|j\|\iota_k)
 =(g_2^b)^{\Delta_k\gamma_{\uid,j}}g_2^{\eta_{3,\uid,\boldsymbol u,j,k}}.
\]
For queries with a different index set, and for other $H_3$ inputs, use independent known-exponent elements. Distinct queried identities ensure that these rules never assign conflicting values across key tuples, despite the absence of $\mathcal I$ in $H_3$'s input. Every complete oracle input is recorded and answered consistently. Conditional on $b$, independent uniform offsets make all oracle values independent uniform elements with their real distribution.

\paragraph{Key responses.}
Consider a query with $\mathcal I=\mathcal I^\star$ and an honest attribute $t$ in the policy. Define the known scalars (with $2\le j\le d$ for $B_j$)
\[
 D=\boldsymbol\Delta\cdot\boldsymbol u,\quad
 A=\sum_{k=1}^n u_k\eta_{1,t,k},\quad
 B_j=\sum_{k=1}^n u_k(\eta_{2,j,k,\mathcal I^\star}+\eta_{3,\uid,\boldsymbol u,j,k}).
\]
The exponent of the desired key has its only unavailable quadratic term
\[
 abD\left(\nu_{t,1}+\sum_{j=2}^d \nu_{t,j}\gamma_{\uid,j}\right).
\]
For an unauthorized query, the parenthesis equals $N_i\boldsymbol d_{\uid}=0$. For an authorized query, $D=0$ by MA-ABIPFE admissibility. The key is therefore computable as
\begin{equation}\label{eq:fe-simkey}
\begin{aligned}
 &g_2^{\widehat\alpha_t A+\sum_{j=2}^d\widehat y_{t,j}B_j}
 (g_2^a)^{\nu_{t,1} A+\sum_{j=2}^d \nu_{t,j}B_j}
 \cdot(g_2^b)^{D(\widehat\alpha_t+\sum_{j=2}^d\widehat y_{t,j}\gamma_{\uid,j})}.
\end{aligned}
\end{equation}
For a nonmatching index set, every hash used in the queried key has a known exponent, so $g_2^a$ suffices to compute the key. For an authority absent from the policy, its known secret exponents suffice even when the hashes are programmed. Each response is thus exactly the key determined by the simulated public keys and hash table.

\paragraph{Initial challenge.}
Implicitly choose
\[
 B^0=-abc\boldsymbol e_1\boldsymbol\Delta^\top,\qquad
 \boldsymbol w^{(j,k),0}=\boldsymbol f^{(j),0}=-ac\boldsymbol e_j
 \quad(2\le j\le h),
\]
with zero vectors for $j>h$. For an honest row set $r_i^0=c$. Let $t=\rho(i)$. Its initial components are computable as
\begin{equation}\label{eq:fe-base}
\begin{aligned}
 C_{1,i,k}^0={}&e(g_1^c,g_2^b)^{\widehat\alpha_t\Delta_k}
 e(g_1^c,g_2)^{\widehat\alpha_t\eta_{1,t,k}}
 e(g_1^c,g_2^a)^{N_{i,1}\eta_{1,t,k}},\\
 C_{2,i}^0={}&g_1^c,\quad
 C_{3,i,j,k}^0=e((g_1^c)^{\widehat y_{t,j}},h_{2,j,k}),\quad
 C_{4,i,j}^0=(g_1^c)^{\widehat y_{t,j}}.
\end{aligned}
\end{equation}
The $abcN_{i,1}\Delta_k$ term in $C_1$ cancels against $N_iB^0$; the $acN_{i,j}$ terms in $C_3,C_4$ cancel against the zero-sharing inner products. For a corrupt row choose known uniform $r_i^0$. All initial sharing terms vanish on this row, so compute its components directly from its supplied public key and the public hashes. Put $C_{0,k}^0=g_T^{v_{\beta,k}}T^{-\Delta_k}$ for a uniform bit $\beta$.

\paragraph{Re-randomization.}
Sample $\widetilde B\sample\F^{d\times n}$ and each $\widetilde r_i\sample\F$ independently. Independently sample each vector $\widetilde{\boldsymbol w}^{(j,k)}$ and $\widetilde{\boldsymbol f}^{(j)}$ uniformly from $Z$. Let $\widetilde z_k=\widetilde B_{1,k}$. Write $\widetilde{\boldsymbol z}=(\widetilde z_k)_{k=1}^n$. Multiply the initial components by
\begin{equation}\label{eq:fe-rerand}
\begin{aligned}
 D_{0,k}&=g_T^{\widetilde z_k},\ D_{2,i}=g_1^{\widetilde r_i},
 D_{1,i,k}=g_T^{(N_i\widetilde B)_k}
 e((g_1^{\alpha_{\rho(i)}})^{\widetilde r_i},h_{1,\rho(i),k}),\\
 D_{3,i,j,k}&=e(g_1^{N_i\widetilde{\boldsymbol w}^{(j,k)}}
 (g_1^{y_{\rho(i),j}})^{\widetilde r_i},h_{2,j,k}),\
 D_{4,i,j}=g_1^{N_i\widetilde{\boldsymbol f}^{(j)}}
 (g_1^{y_{\rho(i),j}})^{\widetilde r_i}.
\end{aligned}
\end{equation}
These multipliers use only known scalars and public group elements. The effective coins are jointly uniform in the modified scheme's product sampling space even conditional on the previously simulated view and the challenge exponents. Proposition~\ref{prop:fe-invariance} then converts the payload law for $N$ to the law for the unchanged output header $M$.

\paragraph{Real and random targets.}
When $T=g_T^{abc}$, the complete view is exactly real. When $T=g_T^\theta$ for an independent uniform exponent $\theta\in\F$, the only components that involve $\theta$ or $\beta$ are $C_{0,k}$, whose exponent vector is $\boldsymbol v_\beta+\widetilde{\boldsymbol z}-\theta\boldsymbol\Delta$. Even conditional on all other components and on the simulator's other coins, this is uniform along the same affine line for either bit, since
$
 \boldsymbol v_0-\theta\boldsymbol\Delta
 =\boldsymbol v_1-(\theta-1)\boldsymbol\Delta.
$
Thus the adversary's success probability is $1/2$ in the random case. Outputting one exactly when the adversary guesses $\beta$ yields the stated DBDH advantage.\qed

\end{proof}

\subsection{Large-Universe Modification: The Original Simulation Gap}\label{app:pkc-large}
Propositions~\ref{prop:fe-attack} and~\ref{prop:fe-invariance} apply to the large-universe interface, including multiple attributes per authority and repeated labels. However, they do not alone establish its repaired security under the source's $L$-DBDH assumption. We separate the key-cancellation calculation that carries over from a remaining ciphertext-computability issue.

Here $L$ is the number of indexed slots in the source's $L$-DBDH instance; assign the $\ell$ challenge rows distinct slots, so $L\ge\ell$. The symbols $\mu_i$ denote the hidden field exponents associated with those slots, not scalars known to the simulator. Expressions involving $1/\mu_i$ are read on the event $\mu_i\ne0$ for all used slots. Conditioning the source's uniform $\mu_i\in\F$ on nonzero values changes the distribution by at most $L/q$. The offsets $\widehat\alpha_\theta,\widehat y_{\theta,j},\xi_{t,j}$ below are independently sampled from $\F$. Set $J=\{i:\rho(i)\text{ is corrupt}\}$, $h=d-\rank(M_J)$, and obtain $N$ by corrupt-row normalization. For an attribute $t$, let $\theta=\mathsf T(t)$.

Let $X_\theta=\{i:\rho(i)=\theta\}$ and $X_{\theta,t}^{\mathrm{other}}=X_\theta\setminus\{i:\delta(i)=t\}$. The source reduction embeds
\[
 \alpha_\theta=\widehat\alpha_\theta+a\sum_{i\in X_\theta}\mu_iN_{i,1},\qquad
 y_{\theta,j}=\widehat y_{\theta,j}+a\sum_{i\in X_\theta}\mu_iN_{i,j}.
 \quad(2\le j\le h)
\]
For $1\le j\le h$, it programs
$R(t\|j\|\mathcal I^\star)
=g_2^{\xi_{t,j}}
\prod_{i\in X_{\theta,t}^{\mathrm{other}}}
(g_2^{a\mu_i})^{N_{i,j}}$.
For $j>h$, it sets
$y_{\theta,j}=\widehat y_{\theta,j}$ and
$R(t\|j\|\mathcal I^\star)=g_2^{\xi_{t,j}}$.
 With the challenge-difference programming above, an unauthorized query can use initial $\tau_j^0=-b(\boldsymbol\Delta\cdot\boldsymbol u)d_{\uid,j}$ for $1\le j\le h$ (and zero for $j>h$), where $\boldsymbol d_{\uid}\in\F^d$ has first coordinate one and is orthogonal to the normalized rows labeled by the queried and corrupt attributes. The quadratic terms associated with other attributes cancel, leaving
\[
 ab(\boldsymbol\Delta\cdot\boldsymbol u)
 \sum_{i:\delta(i)=t}\mu_iN_i\boldsymbol d_{\uid}=0.
\]
For authorized admissible queries the factor $\boldsymbol\Delta\cdot\boldsymbol u$ is zero. Independent key re-randomization can restore the required distribution of $\tau_j$. These observations explain why function-key cancellation is compatible with the proposed modification.

For the initial challenge, the source uses $r_i^0=c/\mu_i$.
For $2\le j\le h$, our initial zero-sharing vectors
produce $-acN_{i,j}$ and preserve the intended diagonal
cancellation. However, the unchanged auxiliary component
must also be computed. For $1\le j\le h$, it equals
\begin{equation}\label{eq:large-obligation}
	C_{5,i,j}^0=
	g_2^{\xi_{\delta(i),j}c/\mu_i}
	\prod_{s\in X_{\rho(i),\delta(i)}^{\mathrm{other}}}
	(g_2^{ac\mu_s/\mu_i})^{N_{s,j}}.
\end{equation}
For $j>h$, only the first factor remains.
The off-diagonal product is available in the stated
$L$-DBDH input. The first factor is not directly supplied:
Assumption 3.2 of~\cite{DP23Full} supplies
$g_1^{c/\mu_i}$ but does not list $g_2^{c/\mu_i}$,
nor an efficient source-group map that would produce it.
Treating $\xi_{\delta(i),j}$ as a known scalar therefore
does not by itself compute this factor in $\G_2$.
Pairing can produce target-group encodings, which do not
substitute for the required source-group element.
Full zero-sharing re-randomization cannot repair an
initial component that has not been computed.

The source reduction requires the same unavailable
source-group element.
We complete the simulation under an augmented assumption
that additionally supplies
$\{g_2^{c/\mu_i}\}_{i\in[L]}$.
The next two subsections establish generic hardness of
this assumption and give a static security reduction
for the full zero-sharing modification.

\subsection{An Augmented Assumption and Its Generic Hardness}\label{app:pkc-augmented}
The augmentation changes the computational assumption,
not the modified scheme.

\paragraph{Augmented $L$-DBDH assumption.}
Let $(\G_1,\G_2,\GT,g_1,g_2,e)$ be asymmetric prime-order bilinear groups, with $g_T=e(g_1,g_2)$. Choose $a,b,c\sample\F$ and independently $\mu_1,\ldots,\mu_L\sample\F^*$. Define the following sets of formal exponents:
\begin{equation}\label{eq:aug-exponents}
\begin{aligned}
 \mathcal P={}&\{1,b,c\}\cup\{a\mu_i,c/\mu_i:i\in[L]\}
 \cup\{c\mu_s/\mu_i,ac\mu_s/\mu_i:i,s\in[L],\ i\ne s\},\\
 \mathcal Q^+={}&\{1,a,b\}\cup\{a\mu_i,c/\mu_i:i\in[L]\}\cup\{ac\mu_s/\mu_i:i,s\in[L],\ i\ne s\}.
\end{aligned}
\end{equation}
The public instance is $\mathcal D_L^+=(\mathsf{desc},\{g_1^p:p\in\mathcal P\},\{g_2^q:q\in\mathcal Q^+\})$, where $\mathsf{desc}$ includes the group description and generators. Augmented $L$-DBDH asserts computational indistinguishability of
$
 (\mathcal D_L^+,g_T^{abc})\ \hbox{and}\
 (\mathcal D_L^+,g_T^\vartheta),\ \vartheta\sample\F.
$
Its distinguishing advantage is the absolute difference between the probabilities of outputting one in these experiments. The only new source-group family, relative to Assumption 3.2 of~\cite{DP23Full}, is $\{g_2^{c/\mu_i}\}_{i\in[L]}$. Sampling the denominators in $\F^*$ avoids undefined inputs; conditioning the source's sampling on this event costs at most $L/q$ in statistical distance.

\begin{proposition}[Generic hardness of augmented $L$-DBDH]\label{prop:aug-generic}
In the asymmetric generic bilinear group model with independent source-group encodings and no source-group conversion oracle, a generic algorithm making at most $Q$ oracle operations has augmented $L$-DBDH advantage
\[
 O\!\left(\frac{(L+1)(Q+L^2)^2}{q-1}\right).
\]
In particular, the advantage is negligible when $L,Q$ are polynomially bounded in the security parameter and $q$ is exponentially large.
\end{proposition}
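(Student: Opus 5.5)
The argument follows the standard Boneh--Boyen--Goh / Shoup master-theorem template for asymmetric generic bilinear groups. First we clear denominators. Since $\mu_i\in\F^*$, multiply every formal exponent by the monomial $\Pi=\prod_{i\in[L]}\mu_i$. Each element of $\mathcal P$ and $\mathcal Q^+$ then becomes a polynomial in $\F[a,b,c,\mu_1,\ldots,\mu_L]$ of degree at most $L+2$. The target candidate $abc$ becomes $abc\,\Pi$ in $\GT$ after multiplying by $\Pi^2$, and the general target-group element becomes $\Pi^2$ times a pairing product.

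We then simulate. The generic oracle replaces every handle by a formal polynomial: linear combinations of $\mathcal P$ in $\G_1$, of $\mathcal Q^+$ in $\G_2$, and of the products $\mathcal P\cdot\mathcal Q^+$ together with the challenge variable in $\GT$. Handles are answered by formal equality. The simulation deviates from the real game only if two formally distinct polynomials collide at the random point $(a,b,c,\boldsymbol\mu)$. By Schwartz--Zippel over $\F\times\F\times\F\times(\F^*)^L$, each collision has probability $O((L+2)\cdot\text{degree})/(q-1)$. There are at most $(Q+|\mathcal P|+|\mathcal Q^+|)^2=O((Q+L^2)^2)$ pairs, which yields the stated $O((L+1)(Q+L^2)^2/(q-1))$ bound after accounting for the $2L+O(1)$ total degree in $\GT$. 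The $q-1$ denominator comes from sampling the $\mu_i$ from $\F^*$.

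The main obstacle is the symbolic independence step. We must show that $abc$ does not lie in the $\F$-span of the pairwise products $\{p\cdot q:p\in\mathcal P,\ q\in\mathcal Q^+\}$, viewed as Laurent polynomials in the $\mu_i$. Once this holds, replacing $g_T^{abc}$ by $g_T^{\vartheta}$ is formally invisible, and the two experiments agree unless a collision occurs. To verify it, consider every product that contains the monomial $abc$ with total $\mu$-degree $0$. We enumerate these by their $(a,b,c)$-degree pattern:
\begin{itemize}
\item $b\cdot(ac\mu_s/\mu_i)$ gives $abc\,\mu_s/\mu_i$ with $s\neq i$, which is never $\mu$-free.
\item $(a\mu_i)\cdot$ anything containing $bc$ is impossible, since no element of $\mathcal P$ or $\mathcal Q^+$ contains $bc$.
\item $c\cdot ab$ is impossible, since $ab\notin\mathcal Q^+$.
\item $(c/\mu_i)\cdot$ an $ab$-monomial is impossible for the same reason.
\item $(c\mu_s/\mu_i)\cdot(a\cdot$ anything with $b)$ is impossible, since no element of $\mathcal Q^+$ contains $ab$.
\end{itemize}
The new family $g_2^{c/\mu_i}$ must be checked specifically. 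Its products with $\mathcal P$ are $c/\mu_i$, $bc/\mu_i$, $c^2/\mu_i$, $ac\mu_j/\mu_i$, $c^2/(\mu_i\mu_j)$, and terms of the form $c^2\mu_s/(\mu_i\mu_j)$ or $ac^2\mu_s/(\mu_i\mu_j)$. None of these contains $abc$; in particular $b\cdot c/\mu_i=bc/\mu_i$ lacks $a$. Every monomial of the form $abc\cdot(\mu\text{-Laurent})$ that arises therefore has a nontrivial $\mu$-part $\mu_s/\mu_i$ with $s\ne i$. A linear relation among distinct Laurent monomials would then force the coefficient on $abc$ to vanish.

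It remains to combine the pieces. A formal polynomial that is identical in both worlds yields identical adversary views, so the advantage is bounded by the collision probability. This gives the claimed generic bound, which is negligible for polynomial $L,Q$ and exponential $q$.
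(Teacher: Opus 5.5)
Your proposal is correct and follows essentially the same route as the paper: clear denominators by $\prod_i\mu_i$, and bound unintended collisions by Schwartz--Zippel over $O((Q+L^2)^2)$ pairs. Then show $abc\notin\Span\{pq\}$, because $b$ occurs only as the bare monomial $b$ and every $ac$-type monomial is $ac\mu_s/\mu_i$ with $s\ne i$; this is your first bullet, and it already subsumes the other cases. Two small slips: after multiplying by $\Pi^2$ the challenge becomes $abc\,\Pi^2$, not $abc\,\Pi$, and the per-pair Schwartz--Zippel bound is simply $(2L+4)/(q-1)$, not the garbled ``$O((L+2)\cdot\text{degree})/(q-1)$''.
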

\begin{proof}
Work first in the Laurent polynomial ring
$\F[a,b,c,\mu_1^{\pm1},\ldots,\mu_L^{\pm1}]$.
Formal source-group exponents lie in the linear spans of $\mathcal P$ and $\mathcal Q^+$; target-group exponents, excluding the challenge, lie in
$
 \mathcal V=\Span_{\F}\bigl(\{1\}\cup
 \{p q:p\in\mathcal P,\ q\in\mathcal Q^+\}\bigr).
$
Neither source set contains a monomial involving $b$ other than $b$ itself. Thus a product of two listed source monomials could equal $abc$ only by multiplying $b$ and $ac$. Neither set contains $ac$: its apparent substitutes $ac\mu_s/\mu_i$ have $s\ne i$. Consequently no generating monomial of $\mathcal V$ equals $abc$. Linear independence of distinct Laurent monomials gives $abc\notin\mathcal V$. The same observation would also exclude $abc$ if products of two $\mathcal Q^+$ monomials were included, as in the independence check in the source appendix.

To bound accidental collisions, we clear denominators
before applying the Schwartz--Zippel bound. Put $D_\mu=\prod_{i=1}^L\mu_i$. Multiplying a difference of source exponents by $D_\mu$ yields a polynomial of degree at most $L+2$. Multiplying a difference of target exponents by $D_\mu^2$ yields one of degree at most $2L+4$, also when the target includes $abc$ or an independent variable $\vartheta$. These multiplications preserve zero tests on the sampling domain. For a nonzero such polynomial, the Schwartz--Zippel bound on independent variables sampled from $\F$ or $\F^*$ is at most $(2L+4)/(q-1)$. There are $O(L^2+Q)$ encodings, hence $O((L^2+Q)^2)$ possible pairs whose evaluations could collide. Conditioned on no unintended collision, formal equality tests have the same outcomes in both challenge worlds: an expression $v+\lambda abc$ with $v\in\mathcal V$ can vanish identically only if $\lambda=0$ and $v=0$, exactly as for $v+\lambda\vartheta$. Coupling the random encodings therefore gives identical generic transcripts outside the collision event. A union bound in the two experiments proves the claim.\qed
\end{proof}

\subsection{Static Security of the Large-Universe Modification}\label{app:pkc-large-proof}
The construction is exactly the large-universe modification in Appendix~\ref{app:pkc-repair}, with the attribute-indexed $H_1$ convention made explicit in Appendix~\ref{app:pkc-notation}. We use the static security model specified in
Appendix~\ref{app:pkc-notation}, including its explicit
corrupt-set nontriviality condition and the source's
distinct-identity and matching-index-set requirements.

\begin{theorem}\label{thm:fe-large}
Under augmented $L$-DBDH, the modified large-universe construction is statically IND-CPA secure in the random oracle model of Appendix~\ref{app:pkc-notation}, for polynomial sharing width and at most $L$ challenge rows. Multiple attributes per authority and repeated attribute labels are allowed. An adversary of advantage $\epsilon$ yields an augmented $L$-DBDH distinguisher with difference-of-probabilities advantage $\epsilon$.
\end{theorem}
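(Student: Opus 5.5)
The reduction follows the small-universe proof of Theorem~\ref{thm:fe-small}, with the source's $\mu_i$-slot embedding from Appendix~\ref{app:pkc-large}. The augmented family $\{g_2^{c/\mu_i}\}$ closes the one computability gap identified in~\eqref{eq:large-obligation}.

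\emph{Setup and normalization.} Given an augmented $L$-DBDH instance, the reduction discards identical challenge vectors and sets $\boldsymbol\Delta=\boldsymbol v_0-\boldsymbol v_1$. It assigns the $\ell\le L$ challenge rows distinct slots and conditions on $\mu_i\ne0$, which the augmented sampling already guarantees. It then normalizes the corrupt rows with Lemma~\ref{lem:normalize}, obtaining $N=MP$ and $h$.

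\emph{Public keys and oracles.} For each honest authority $\theta$ the reduction embeds $\alpha_\theta$ and $y_{\theta,j}$ for $j\le h$ as displayed in Appendix~\ref{app:pkc-large}, publishing them from $g_1^{a\mu_i}$. It programs $H_1(t\|\iota_k\|\mathcal I^\star)=(g_2^b)^{\Delta_k}g_2^{\eta}$, gives $H_2$ known exponents, and programs $H_3$ with $(g_2^b)^{\Delta_k d_{\uid,j}}$ for unauthorized queries with $\boldsymbol d_{\uid}$ as before, and with zero for authorized ones. The oracle $R$ is programmed as in the source, using $g_2^{a\mu_i}$. Because the $H_1$ input is the attribute $\delta(i)$ while the authority exponents sum over all of $X_\theta$, the $ab\mu_s$ cross terms from other attributes of the same authority do not cancel by themselves. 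The reduction absorbs them through the $R$-factors and the initial $\tau_j^0=-b(\boldsymbol\Delta\cdot\boldsymbol u)d_{\uid,j}$, and checks that every remaining monomial in the key ($ab$ times sums of $\mu_s N_s\boldsymbol d_{\uid}$, $a\mu_s$, $b$, $\eta$) is either zero by orthogonality or admissibility, or available from $\mathcal Q^+$. Fresh $\tau$-offsets and published $Z^{(j)}=g_1^{\tau_j}$ then restore the key distribution. Here $g_1^{b}$ is available since $b\in\mathcal P$.

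\emph{Challenge.} For each honest row the reduction sets $r_i^0=c/\mu_i$, $B^0=-abc\,\boldsymbol e_1\boldsymbol\Delta^\top$, and zero sharings $-ac\boldsymbol e_j$ for $j\le h$. It must verify that $C_{1},C_{2},C_{3},C_{4}$ reduce to elements built from $g_1^{c/\mu_i}$, $g_1^{c\mu_s/\mu_i}$, $g_1^{ac\mu_s/\mu_i}$, and their pairings. The diagonal $abc$ and $ac N_{i,j}$ terms cancel exactly as in~\eqref{eq:fe-base}, while the off-diagonal $s\ne i$ terms use the listed $\mathcal P$ elements. The auxiliary component $C_{5,i,j}^0$ is now computed from~\eqref{eq:large-obligation}: the first factor uses $(g_2^{c/\mu_i})^{\xi}$ and the off-diagonal product uses $g_2^{ac\mu_s/\mu_i}$. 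Corrupt rows use known $r_i^0$, since all initial sharing terms vanish there. The reduction sets $C_{0,k}^0=g_T^{v_{\beta,k}}T^{-\Delta_k}$. It then re-randomizes every coin as in~\eqref{eq:fe-rerand}, also multiplying $C_5$ by $R(\cdot)^{\widetilde r_i}$, and invokes Proposition~\ref{prop:fe-invariance} to return to header $M$. The random-$T$ case concludes by the same affine-line argument as in Theorem~\ref{thm:fe-small}.

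The main obstacle is the bookkeeping for repeated labels and multiple attributes per authority. This means showing that every cross monomial $a\mu_s\cdot c/\mu_i$ with $s\ne i$, arising in $C_1$ through $H_1$ and $\alpha_\theta$ and in $C_3,C_4$ through $y_{\theta,j}$, is either in the instance or cancelled by the zero-sharing and $B^0$ choices. If a term with $b$ survives in $C_1$, I would reprogram $H_1$ to carry only known exponents on coordinates where it appears, or route that term through the $R$/$C_5$ cancellation instead.
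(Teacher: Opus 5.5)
Your proposal is correct and follows the paper's proof essentially step for step. It uses the $\nu$-truncated source embedding, programs $R$ over the other-attribute rows $O_t$ with $\tau_j^0=-bDd_{\uid,j}$ and $Z^{(j),0}=(g_1^b)^{-Dd_{\uid,j}}$, takes $r_i^0=c/\mu_i$ and computes $C_{5,i,j}^0$ from the new $g_2^{c/\mu_i}$, and finishes with re-randomization and Proposition~\ref{prop:fe-invariance}.

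The worry in your last paragraph does not arise. The paper writes $\alpha_\theta r_i^0-acN_{i,1}$ as a computable $V_{i,1}\in\G_1$, built from $g_1^{c/\mu_i}$ and $g_1^{ac\mu_s/\mu_i}$ with $s\ne i$. It then pairs $V_{i,1}$ with the group element $h_{1,\delta(i),k}$ and adds the correction $e(g_1^c,g_2^a)^{N_{i,1}\eta_{1,\delta(i),k}}$. So the surviving $abc\mu_s/\mu_i$ terms are obtained through this pairing at no cost.

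You should drop your fallback of stripping $b$ from $H_1$ (or rerouting through $C_5$). That $b\Delta_k$ is exactly what cancels $N_iB^0$ on the diagonal of $C_1$. It also supplies the first coordinate of $\boldsymbol d_{\uid}$ in the key cancellation.
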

\begin{proof}
The reduction receives $(\mathcal D_L^+,T)$ and publishes the group description. Upon receiving the static commitments, put $\boldsymbol\Delta=\boldsymbol v_0-\boldsymbol v_1$ and $n=|\mathcal I^\star|$. Equal challenge vectors have zero advantage. Otherwise normalize the corrupt rows to obtain $N=MP$ and $h=d-\rank(M_J)$, as in Appendix~\ref{app:pkc-proof}. Assign each challenge row a distinct instance slot. Throughout this proof the transmitted policy header is $M$, while payload calculations use $N$. Define
\[
 \nu_{i,j}=\begin{cases}N_{i,j},&j\le h,\\0,&j>h,\end{cases}
 \quad X_\theta=\{i:\rho(i)=\theta\},\quad
 O_t=X_{\mathsf T(t)}\setminus\{i:\delta(i)=t\}.
\]
The vectors $\nu_i$ are truncated rows, not a replacement access matrix.

\paragraph{Authority keys and hash values.}
For every honest authority occurring in the policy, choose independent uniform offsets and set implicitly
\[
 \alpha_\theta=\widehat\alpha_\theta+a\sum_{s\in X_\theta}\mu_s\nu_{s,1},\qquad
 y_{\theta,j}=\widehat y_{\theta,j}+a\sum_{s\in X_\theta}\mu_s\nu_{s,j}\quad(2\le j\le d).
\]
Their public encodings use the supplied $g_1^{a\mu_s}$. Authorities outside the policy are generated honestly; corrupt public keys are used as supplied, without their discrete logarithms.

Use the $H_1,H_2,H_3$ programming from Appendix~\ref{app:pkc-proof}, with $H_1$ indexed by attribute $t$ rather than authority. In particular,
\[
 h_{1,t,k}=(g_2^b)^{\Delta_k}g_2^{\eta_{1,t,k}},\quad
 h_{2,j,k}=g_2^{\eta_{2,j,k}},\quad
 h_{3,\uid,\boldsymbol u,j,k}=(g_2^b)^{\Delta_k\gamma_j}g_2^{\eta_{3,\uid,\boldsymbol u,j,k}}
\]
for a matching-index query. For an unauthorized query, choose $\boldsymbol d_{\uid}$ with first coordinate one and orthogonal to every queried or corrupt row of $N$; such a vector exists by unauthorizedness. Since the corrupt rows span $\{0\}^h\times\F^{d-h}$, this vector satisfies $d_{\uid,j}=0$ for $j>h$. Put $\gamma_j=d_{\uid,j}$ for $j\ge2$. For an authorized admissible query put $\boldsymbol d_{\uid}=\boldsymbol e_1$ and $\gamma_j=0$, noting that $\boldsymbol\Delta\cdot\boldsymbol u=0$. All offsets are independent uniform field elements. Nonmatching-index queries and other oracle inputs are treated as in the small-universe proof. Distinct committed identities avoid conflicting $H_3$ programming.

For every attribute $t$ of an honest policy authority and every $j\in[d]$, program
\begin{equation}\label{eq:lu-R}
 R(t\|j\|\mathcal I^\star)
 =g_2^{\xi_{t,j}}\prod_{s\in O_t}(g_2^{a\mu_s})^{\nu_{s,j}},\qquad \xi_{t,j}\sample\F.
\end{equation}
All other $R$ inputs receive independent known-exponent random elements. Only polynomially many queried or used inputs need be instantiated. Record each full oracle input and reuse its value. Conditional on the instance exponents, all authority offsets and oracle offsets independently mask fixed shifts. Thus public keys and oracle values have their joint real distribution; this includes attributes absent from the policy but owned by a policy authority.

\paragraph{Matching-index key responses.}
Fix a queried attribute $t$ of an honest policy authority $\theta$, and abbreviate
\[
 D=\boldsymbol\Delta\cdot\boldsymbol u,\quad
 A=\sum_k u_k\eta_{1,t,k},\quad
 B_j=\sum_k u_k(\eta_{2,j,k}+\eta_{3,\uid,\boldsymbol u,j,k}).
\]
Set initial auxiliary exponents $\tau_j^0=-bD d_{\uid,j}$ for all $j\in[d]$, so the auxiliary key encodings are computable as
$Z_t^{(j),0}=(g_1^b)^{-D d_{\uid,j}}$.
The base key augmented by $\prod_j R(t\|j\|\mathcal I^\star)^{\tau_j^0}$ has an unavailable quadratic term only of the form
$
 abD\sum_{s:\delta(s)=t}\mu_s\nu_s\boldsymbol d_{\uid}.
$
For an unauthorized query every such row is orthogonal to $\boldsymbol d_{\uid}$; truncation does not affect the inner product. For an authorized admissible query $D=0$. This term therefore vanishes in either case. Explicitly the initial main key component is
\begin{equation}\label{eq:lu-key}
\begin{aligned}
 K_{t,\boldsymbol u}^{0}={}&
 g_2^{\widehat\alpha_\theta A+\sum_{j=2}^d\widehat y_{\theta,j}B_j}
 \prod_{s\in X_\theta}(g_2^{a\mu_s})^{\nu_{s,1}A+\sum_{j=2}^d\nu_{s,j}B_j}\\
 &\cdot(g_2^b)^{D(\widehat\alpha_\theta+
 \sum_{j=2}^d\widehat y_{\theta,j}\gamma_j-
 \sum_{j=1}^d\xi_{t,j}d_{\uid,j})}.
\end{aligned}
\end{equation}
The cancellation removes rows belonging to other attributes of the same authority through $R$, while orthogonality removes all occurrences of the queried attribute. It does not require injective labels.

For each individual key-generation call independently choose $\widetilde\tau_j\sample\F$, multiply $K_{t,\boldsymbol u}^0$ by $\prod_j R(t\|j\|\mathcal I^\star)^{\widetilde\tau_j}$, and multiply $Z_t^{(j),0}$ by $g_1^{\widetilde\tau_j}$. The effective $\tau_j=\tau_j^0+\widetilde\tau_j$ are jointly independent uniform scalars, even conditional on the instance and all public values.

For a nonmatching-index query to a policy authority, the hashes used in its base key have known exponents. Its $\G_2$ authority encodings are computable from $g_2^{a\mu_s}$, so that key is computable directly with fresh uniform $\tau_j$. For an authority outside the policy, known secret exponents suffice, including on matching-index queries with programmed hashes. These cases cover every honest key response; corrupt keys are generated by the adversary from its own information.

\paragraph{Initial challenge ciphertext.}
Choose a uniform bit $\beta$ and implicitly set
\[
 B^0=-abc\boldsymbol e_1\boldsymbol\Delta^\top,\qquad
 \boldsymbol w^{(j,k),0}=\boldsymbol f^{(j),0}=-ac\boldsymbol e_j\ (2\le j\le h),
\]
with zero initial masks for $j>h$. For each honest row put $r_i^0=c/\mu_i$ and $C_{2,i}^0=g_1^{c/\mu_i}$. For $\theta=\rho(i)$, define the computable elements
\begin{equation}\label{eq:lu-V}
\begin{aligned}
 V_{i,1}&=(g_1^{c/\mu_i})^{\widehat\alpha_\theta}
 \prod_{s\in X_\theta\setminus\{i\}}
 (g_1^{ac\mu_s/\mu_i})^{\nu_{s,1}},\\
 V_{i,j}&=(g_1^{c/\mu_i})^{\widehat y_{\theta,j}}
 \prod_{s\in X_\theta\setminus\{i\}}
 (g_1^{ac\mu_s/\mu_i})^{\nu_{s,j}}\quad(2\le j\le d).
\end{aligned}
\end{equation}
They encode $\alpha_\theta r_i^0-acN_{i,1}$ and $y_{\theta,j}r_i^0-ac\nu_{i,j}$, respectively. Set
\begin{equation}\label{eq:lu-challenge}
\begin{aligned}
 C_{0,k}^0&=g_T^{v_{\beta,k}}T^{-\Delta_k},\\
 C_{1,i,k}^0&=e(V_{i,1},h_{1,\delta(i),k})
                  e(g_1^c,g_2^a)^{N_{i,1}\eta_{1,\delta(i),k}},\\
 C_{3,i,j,k}^0&=e(V_{i,j},h_{2,j,k}),\qquad C_{4,i,j}^0=V_{i,j},\\
 C_{5,i,j}^0&=(g_2^{c/\mu_i})^{\xi_{\delta(i),j}}
 \prod_{s\in O_{\delta(i)}}(g_2^{ac\mu_s/\mu_i})^{\nu_{s,j}}.
\end{aligned}
\end{equation}
Indeed, the diagonal $abcN_{i,1}\Delta_k$ in the unexpanded expression for $C_1$ cancels with $N_iB^0$, leaving the displayed $acN_{i,1}\eta_{1,\delta(i),k}$ correction. The initial zero sharings cancel the diagonal
$ac\nu_{i,j}$ in $C_3,C_4$. Every remaining cross term has $s\ne i$. In $C_5$, $i\notin O_{\delta(i)}$, and the augmented instance provides the previously missing first factor. The formulas also apply to $j>h$, where $\nu_{s,j}=0$ and $R=g_2^{\xi_{t,j}}$.

For corrupt rows choose known uniform $r_i^0$. All initial sharing terms vanish because $N_{i,j}=0$ for $j\le h$. Compute $C_1,C_2,C_3,C_4$ directly using the supplied corrupt public keys and public hashes; set $C_{5,i,j}^0=R(\delta(i)\|j\|\mathcal I^\star)^{r_i^0}$. This requires no corrupt secret exponent.

\paragraph{Restoring the joint distribution.}
Independently choose $\widetilde B\sample\F^{d\times n}$, every $\widetilde r_i\sample\F$, and every $\widetilde{\boldsymbol w}^{(j,k)},\widetilde{\boldsymbol f}^{(j)}\sample Z$. Apply the multipliers in~\eqref{eq:fe-rerand}, using the attribute-indexed $h_{1,\delta(i),k}$ in $D_{1,i,k}$, and additionally multiply each auxiliary component by
$
 D_{5,i,j}=R(\delta(i)\|j\|\mathcal I^\star)^{\widetilde r_i}.
$
All these operations use known scalars and public elements. Conditional on the instance and all preceding values, the effective encryption coins are jointly uniform on the modified scheme's product sampling space. In particular, each zero-sharing vector is shared across all rows for its fixed superscript, and its independent offset masks the fixed initial vector. Proposition~\ref{prop:fe-invariance} transfers the complete payload distribution from $N$ to the unchanged header $M$; the auxiliary components remain unchanged under the coupling.

\paragraph{Distinguishing.}
For $T=g_T^{abc}$, the resulting view is exactly the real static experiment for the modified construction. For $T=g_T^\vartheta$ with independent uniform $\vartheta$, every response other than $C_0$ is independent of $\vartheta$ and $\beta$. Conditional on these other values and all other simulator coins, $C_0$ has exponent vector
$\boldsymbol v_\beta+\widetilde{\boldsymbol z}-\vartheta\boldsymbol\Delta$.
The two bits induce the same affine-line distribution, since $\boldsymbol v_0-\boldsymbol v_1=\boldsymbol\Delta$. The adversary consequently guesses with probability $1/2$ in the random-target experiment. Outputting one precisely when its guess equals $\beta$ gives distinguishing advantage $\epsilon$.\qed
\end{proof}

\paragraph{Scope of the repaired result.}
Theorem~\ref{thm:fe-large} applies to the full zero-sharing
modification under augmented $L$-DBDH.
The generic-group analysis does not derive this assumption
from the original one or establish hardness in every
concrete group family.
The bound $L$ concerns challenge rows, not the size
of the attribute universe.

\section{Reconstruction of the Original Security Argument}\label{app:originalproof}
Theorem 5.1 of~\cite{DKW23} claims static security of the original construction under DBDH in the random oracle model. We reconstruct the simulation, including its basis-change step, to separate the computable parts of the argument from the distributional claim that will fail. The algebra below uses a consistent sign convention for the message mask.

\paragraph{Step 1: normalize corrupt rows.}
Given a DBDH instance $(g,g^a,g^b,g^c,T)$, the simulator receives the static commitments from Section~\ref{sec:static}. Put $J=\rho^{-1}(\mathcal C)$, $c_0=\rank(M_J)$, and $h=d-c_0$. The LSSS normalization lemma, with the proof correction from~\cite{DKWLWE21}, yields an equivalent matrix $N$ with $N_{i,k}=0$ for $i\in J$ and $k\le h$; its corrupt rows span the final $c_0$ coordinates. The ordinary share distribution for a fixed secret is preserved. The original argument then proceeds using $N$ on the premise that this preserves the adversary's view. Precisely this extension from shares to the full ciphertext is examined in Section~\ref{sec:rootcause}.

\paragraph{Step 2: generate honest public keys.}
For an honest authority $u=\rho(i)$ in the challenge policy, sample fresh uniform $\alpha'_u$ and $y'_{u,k}$. Implicitly choose
\begin{equation}\label{eq:originalembed}
 \alpha_u=\alpha'_u+abN_{i,1},\qquad
 y_{u,k}=\begin{cases}y'_{u,k}+aN_{i,k},&2\le k\le h,\\y'_{u,k},&h<k\le d.\end{cases}
\end{equation}
The public key is computable as $A_u=\E^{\alpha'_u}e(g^a,g^b)^{N_{i,1}}$, and $Y_{u,k}=g^{y'_{u,k}}(g^a)^{N_{i,k}}$ for $k\le h$, with $Y_{u,k}=g^{y'_{u,k}}$ otherwise. Authorities absent from the challenge are generated normally. Corrupt public keys are retained without extracting their logarithms. Each honest exponent is uniform because it contains an independent uniform offset.

\paragraph{Step 3: program identity hashes.}
For each identity with committed key set $U_{\uid}$, suppress the identity subscript locally and choose a vector $\boldsymbol t$ with $t_1=1$ orthogonal to the rows labeled by $\mathcal C\cup U_{\uid}$. Since the corrupt rows span the trailing subspace, $t_k=0$ for $k>h$. Take independent uniform $\eta_2,\ldots,\eta_d$ and set $H(\uid,k)=(g^b)^{t_k}g^{\eta_k}$ for $2\le k\le h$, and $H(\uid,k)=g^{\eta_k}$ for $k>h$. The values are independent uniform group elements. An identity without relevant key queries can be handled using an arbitrary vector orthogonal to the corrupt rows with first coordinate one. Hash values are recorded and reused consistently.

\paragraph{Step 4: answer key queries.}
For a queried honest authority outside the policy, the simulator knows its secret exponents and computes~\eqref{eq:originalkey} directly. For $u=\rho(i)$ in the policy, expanding the desired exponent gives an unknown $ab$ term with coefficient $N_{i,1}+\sum_{k=2}^hN_{i,k}t_k=N_i\boldsymbol t=0$. Consequently the key can be computed as
\begin{equation}\label{eq:originalsimkey}
 g^{\alpha'_u}\prod_{k=2}^h
 \left((g^b)^{t_ky'_{u,k}}(g^a)^{\eta_kN_{i,k}}g^{\eta_ky'_{u,k}}\right)
 \prod_{k=h+1}^d g^{\eta_ky'_{u,k}}.
\end{equation}

\paragraph{Step 5: prepare a challenge with special coins.}
The simulator implicitly sets $\boldsymbol v^0=-abc\boldsymbol e_1$, $x_k^0=-ac$ for $2\le k\le h$, and $x_k^0=0$ for $k>h$. For an honest row it sets $r_i^0=c$. The ciphertext components are then computable as
\begin{equation}\label{eq:originalbase}
 C_{1,i}^0=e(g^c,g)^{\alpha'_{\rho(i)}},\qquad
 C_{2,i}^0=g^c,\qquad C_{3,i,k}^0=(g^c)^{y'_{\rho(i),k}}.
\end{equation}
Indeed, the $-abcN_{i,1}$ share cancels the $abN_{i,1}$ term multiplied by $c$, and $-acN_{i,k}$ cancels the embedded coordinate term. For a corrupt row, all such shares vanish by normalization. The simulator chooses $r_i^0$ itself and computes $C_{1,i}^0=A_{\rho(i)}^{r_i^0}$, $C_{2,i}^0=g^{r_i^0}$, and $C_{3,i,k}^0=Y_{\rho(i),k}^{r_i^0}$ directly from the supplied public key.

Choose $\beta\sample\{0,1\}$ and put $C_0^0=m_\beta T^{-1}$. This is the encoding consistent with $v_1^0=-abc$ when $T=\E^{abc}$. The printed proof in~\cite{DKW23} uses $m_\beta T$ with this negative sharing exponent. Correcting that sign does not address the matrix-distribution issue studied here; the attack applies to the actual encryption algorithm regardless of this proof-level sign choice.

\paragraph{Step 6: re-randomize within the chosen representation.}
Sample independent offsets $\delta\boldsymbol v\sample\F^d$, $\delta x_k\sample\F$, and $\delta r_i\sample\F$. Update the initial payload by
\begin{equation}\label{eq:originalrerand}
\begin{aligned}
 C_0&=C_0^0\E^{\delta v_1},
 C_{1,i}=C_{1,i}^0\E^{N_i\delta\boldsymbol v}A_{\rho(i)}^{\delta r_i},\\
 C_{2,i}&=C_{2,i}^0g^{\delta r_i},
 C_{3,i,k}=C_{3,i,k}^0Y_{\rho(i),k}^{\delta r_i}g^{N_{i,k}\delta x_k}.
\end{aligned}
\end{equation}
Adding these independent offsets makes the effective sharing vector, coordinate scalars, and row coins independent and uniform. Thus, after the sign correction, a real DBDH target produces a properly distributed encryption payload for $N$. A random target makes $C_0$ independent of the message bit.
The reconstructed simulation produces a correctly distributed
payload for $N$, but does not establish the distribution required
for $M$ in the original scheme, as shown in
Section~\ref{sec:rootcause}.
Our repair resolves this issue by replacing coordinate-wise
masks with full zero sharings and proving the resulting
payload invariance.

\end{document}